\documentclass[pdflatex,bst/sn-mathphys-num]{sn-jnl}

\usepackage{graphicx}%
\usepackage{multirow}%
\usepackage{amsmath,amssymb,amsfonts}%
\usepackage{amsthm}%

\newtheorem{assumption}{Assumption}
\usepackage{mathrsfs}%
\usepackage[title]{appendix}%
\usepackage{xcolor}%
\usepackage{textcomp}%
\usepackage{manyfoot}%
\usepackage{booktabs}%
\usepackage{algorithm}%
\usepackage{algorithmicx}%
\usepackage{algpseudocode}%
\usepackage{listings}%
\usepackage{natbib}
\usepackage{subfig}
\usepackage{mathtools}
\DeclareMathOperator*{\argminA}{arg\,min}

\theoremstyle{thmstyleone}%
\newtheorem{theorem}{Theorem}
\newtheorem{lemma}{Lemma}
\theoremstyle{thmstyletwo}%
\newtheorem{remark}{Remark}%

\theoremstyle{thmstylethree}%

\begin{document}

\title[2D Stability Selection: Design Jittering for Doubly Stable Feature Selection]{2D Stability Selection: Design Jittering for Doubly Stable Feature Selection}


\author*[1]{\fnm{Mahdi} \sur{Nouraie}}\email{soheil.nouraie@mq.edu.au}

\author[1]{\fnm{Houying} \sur{Zhu}}

\author[1,2]{\fnm{Samuel} \sur{Muller}}

\affil[1]{\orgdiv{School of Mathematical and Physical Sciences}, \orgname{Macquarie University}, \orgaddress{\city{Sydney}, \postcode{2109}, \state{NSW}, \country{Australia}}}
\affil[2]{\orgdiv{School of Mathematics and
Statistics}, \orgname{The University of Sydney}, \orgaddress{\city{Sydney}, \postcode{2006}, \state{NSW}, \country{Australia}}}


\abstract{We study feature selection in high-dimensional regression under two distinct sources of instability: sampling variability and measurement error in the design matrix. Stability Selection addresses the former through sub-sampling and aggregation, but does not explicitly stress-test robustness to noisy predictors. We introduce doubly stable feature selection, a perturb-and-aggregate framework that targets features whose inclusion is stable both across randomization and across increasing levels of design noise. The method injects controlled additive noise into the design matrix, fits a fixed base selector such as the Lasso on the perturbed data, and aggregates selection frequencies. Sweeping over a grid of noise levels yields a stability path that summarizes robustness to measurement error while using the full sample size and isolating the effect of design perturbations. On the theory side, we show that classical model-selection conditions are preserved under sufficiently small perturbations, with a high-probability extension for Gaussian noise. Empirically, experiments on synthetic and real datasets show improved robustness compared with Stability Selection and standard base selectors.
}

\keywords{Feature Selection, High-Dimensional Regression, Measurement Error, Perturbation-Based Methods, Stability Selection, Variable Selection}



\maketitle

\section{Introduction}\label{s1} \subsection{Problem and Perspective} 

Estimating discrete structures and selecting relevant features are central tasks in machine learning, bioinformatics, and statistics. In high-dimensional regimes, the usefulness of a selected model depends critically on its \emph{stability} \citep{meinshausen2010stability, sam2010}. Unstable selections are difficult to reproduce and often fail to generalize. This has led to a substantial literature on stability in feature selection and related problems \citep{meinshausen2010stability,shah2013variable,nogueira2018stability,10.1093/bioinformatics/bty750,JMLR:v25:23-0536,nouraie2026bayesian}, as well as work on learning invariant structure under distributional shifts \citep{Shen_Cui_Zhang_Kunag_2020,Zhang_2021_CVPR,cui2022stable}. Two distinct sources drive instability. The first is \emph{sampling variability}: different samples from the same distribution can yield different selected sets \citep{meinshausen2010stability,shah2013variable, faletto2022cluster}. The second is \emph{measurement error} in the predictors, where the effective design varies due to noisy acquisition, leading to an errors-in-variables setting \citep{10.1093/biomet/asaa025,nghiem2023linear,phasegmm}. These sources are typically studied separately. We consider linear regression with fixed design,  $Y = X\beta + \varepsilon$, where $X \in \mathbb{R}^{n\times p}$ is centered and scaled, $\mathbb{E}(\varepsilon| X)=0$, and $\mathrm{Var}(\varepsilon|X)=\sigma^2 I_n$. Our goal is to study feature selection procedures under \emph{both} sampling variability and perturbations of the design. 

\subsection{Doubly Stable Feature Selection} 

We introduce \emph{doubly stable feature selection}, which enforces stability along two axes: (i) stability across randomized realizations at a fixed noise level, and (ii) stability as the level of measurement error increases. Given a base selector, such as the Least Absolute Shrinkage and Selection Operator \citep[Lasso;][]{tibshirani1996regression}, the Elastic Net \citep[ENet;][]{zou2005regularization}, or the Smoothly Clipped Absolute Deviation penalty \citep[SCAD;][]{Fan01122001}, we repeatedly perturb the design matrix via  $X^{(\delta)} = X + W$, where $W$ injects controlled noise with variance $\delta^2$, and aggregate selections across $B$ randomized realizations at fixed $\delta$. This yields selection frequencies that quantify stability under a fixed perturbation level. We then track these frequencies along a grid of increasing $\delta$, producing a \emph{noise path} that reveals how selections degrade under increasing measurement error. This construction generalizes Stability Selection \citep{meinshausen2010stability} by introducing a second axis: instead of varying sub-samples or penalties, we vary the magnitude of perturbations applied directly to the design. The proposed framework has three main properties. First, it is \emph{method-agnostic}: it applies to any selection procedure satisfying mild regularity conditions. Second, it preserves the full sample size $n$, unlike sub-sampling-based approaches such as Stability Selection. Third, it yields a diagnostic analogous to a regularization path: instead of tracking solutions over $\lambda$, we track selection frequencies over $\delta$, exposing robustness to measurement error. 

\subsection{Motivating Example} 

We illustrate this effect in a synthetic experiment with $n=100$ and $p=1,000$, where five predictors are relevant with coefficients $5,4,3,2,1$. The relevant predictors follow a compound symmetric correlation structure with $\rho_{\text{rel}}=0.5$, while irrelevant predictors follow a compound symmetric structure with $\rho_{\text{irr}}=0.1$. Cross-correlations are controlled by $\rho_{\text{mix}}\in\{0.001,0.5,0.9\}$. The response is generated as $Y=X\beta+\varepsilon$ with $\varepsilon\sim\mathcal{N}(0,I_n)$. To model increasing feature noise, we consider perturbations $X^{(\delta)}=X+W$ with $W_{ik}\sim\mathcal{N}(0,\delta^2)$ over a grid $\delta\in[0.01,5]$. For each $\delta$, we generate $B=100$ realizations and compute selection frequencies using Lasso, ENet, and SCAD with fixed tuning parameters. Figure~\ref{fig:grid3x3} shows the results for the Lasso; the ENet and SCAD results are reported in the appendix. At low noise levels, irrelevant predictors can attain high selection frequencies, yielding false discoveries. As $\delta$ increases, these frequencies decay and separation between relevant and irrelevant predictors emerges. For large $\delta$, all frequencies collapse toward zero. This pattern is consistent across methods and correlation regimes. These observations motivate selecting features based on their persistence along the noise path, rather than at a single noise level. 

\subsection{Related Work}

Randomization and perturbation are widely used to stabilize learning procedures, including Stability Selection \citep{meinshausen2010stability}, the Randomized Lasso \citep{meinshausen2010stability}, the Jackknife \citep{Wu1986}, Random Forests \citep{Breiman2001}, and Random-Projection Ensemble Classification \citep{10.1111/rssb.12228}. Our approach differs in that perturbations are applied directly to the design to model measurement error, rather than to resampling or parameter randomization. The perspective is also related to invariance-based methods that seek stable structure across environments \citep{6e5a440e-e99c-3dd7-8598-c57e476dcc30,JMLR:v26:23-0714,Shen02012026}. 

Section~\ref{s2} presents the proposed methodology and theoretical results. Section~\ref{s3} reports empirical evaluations on synthetic and real datasets. Section~\ref{s4} concludes with a discussion of limitations and directions for future work. All remarks and proofs are deferred to the appendix.

\begin{figure}[H] 
\centering 
\subfloat[$\rho_{\text{mix}} = 0.001$\label{F11}]{\includegraphics[width=0.32\textwidth, height=0.32\textheight]{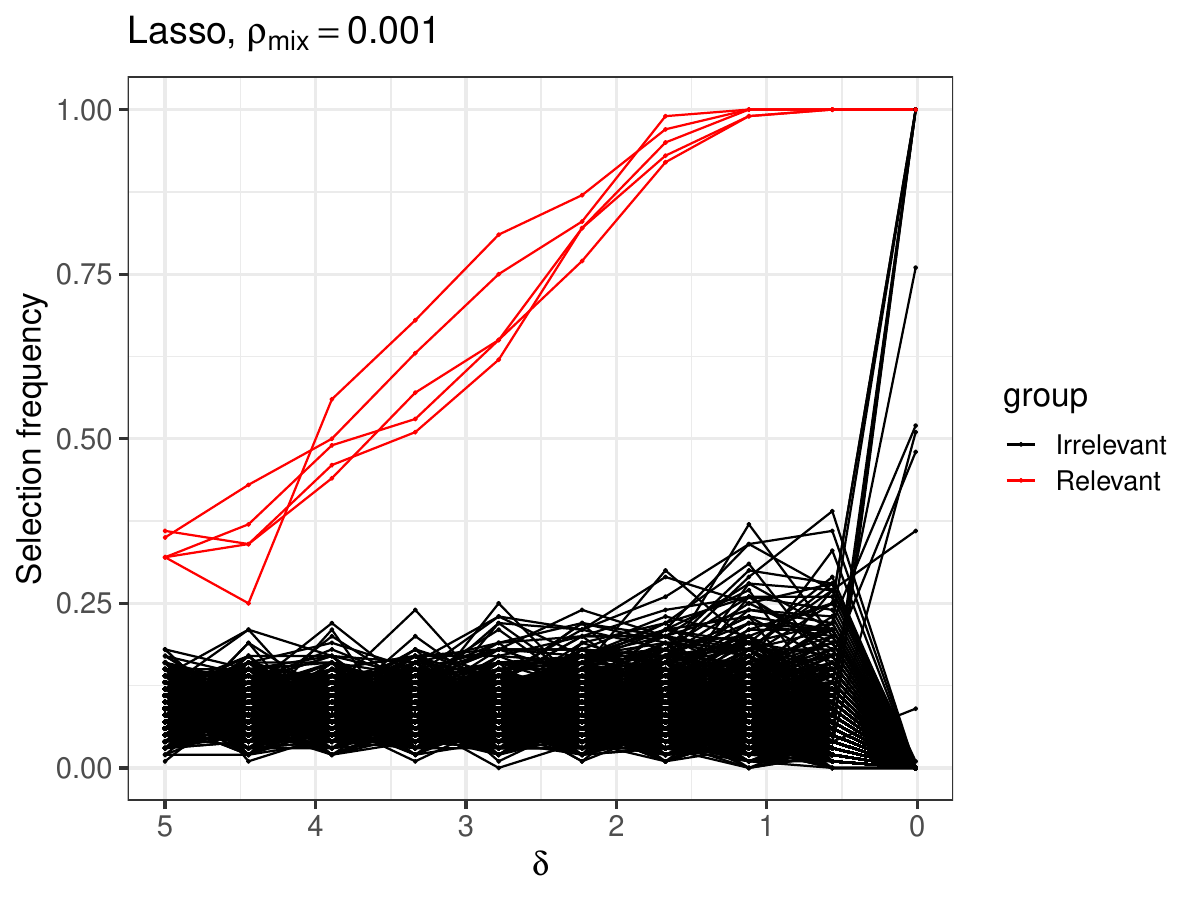}} \hfill \subfloat[$\rho_{\text{mix}} = 0.5$\label{F12}]{\includegraphics[width=0.32\textwidth, height=0.32\textheight]{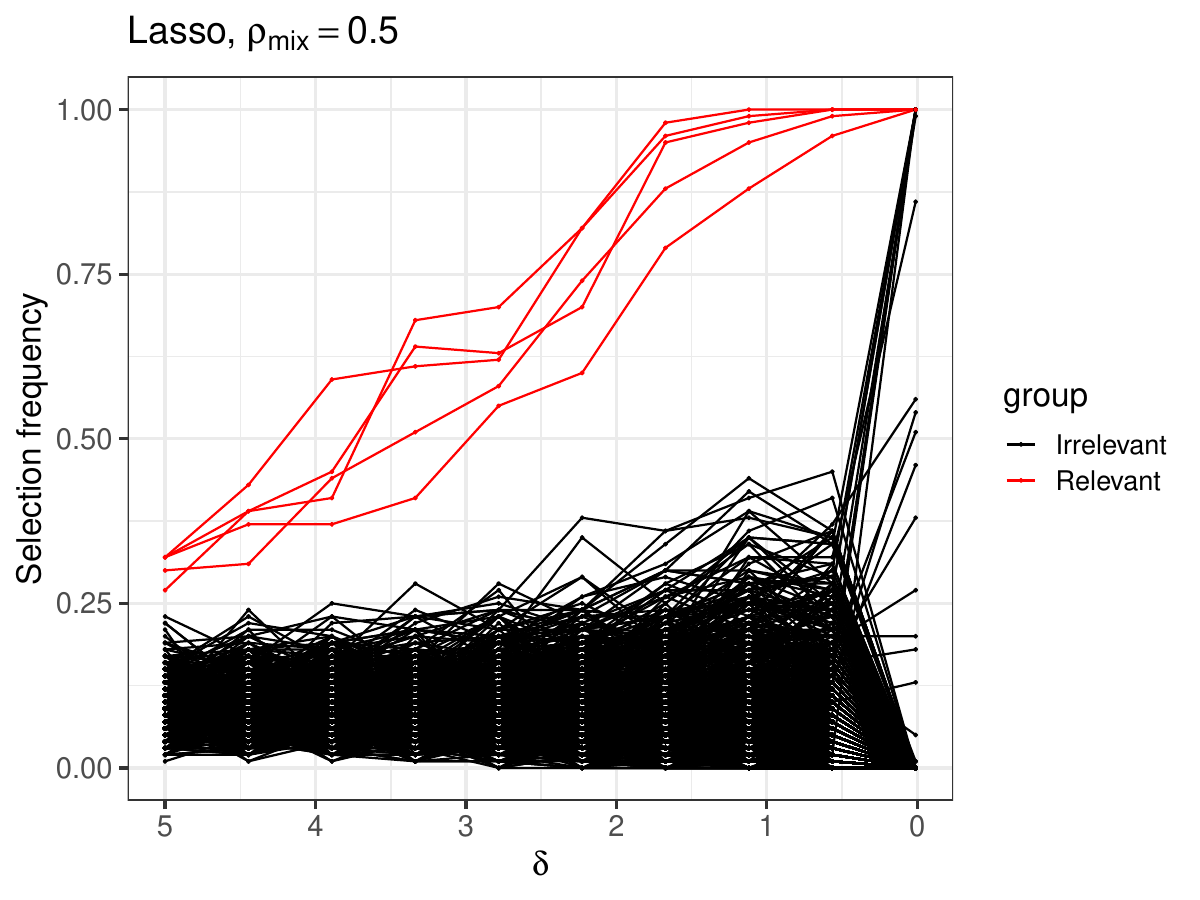}} \hfill \subfloat[$\rho_{\text{mix}} = 0.9$\label{F13}]{\includegraphics[width=0.32\textwidth, height=0.32\textheight]{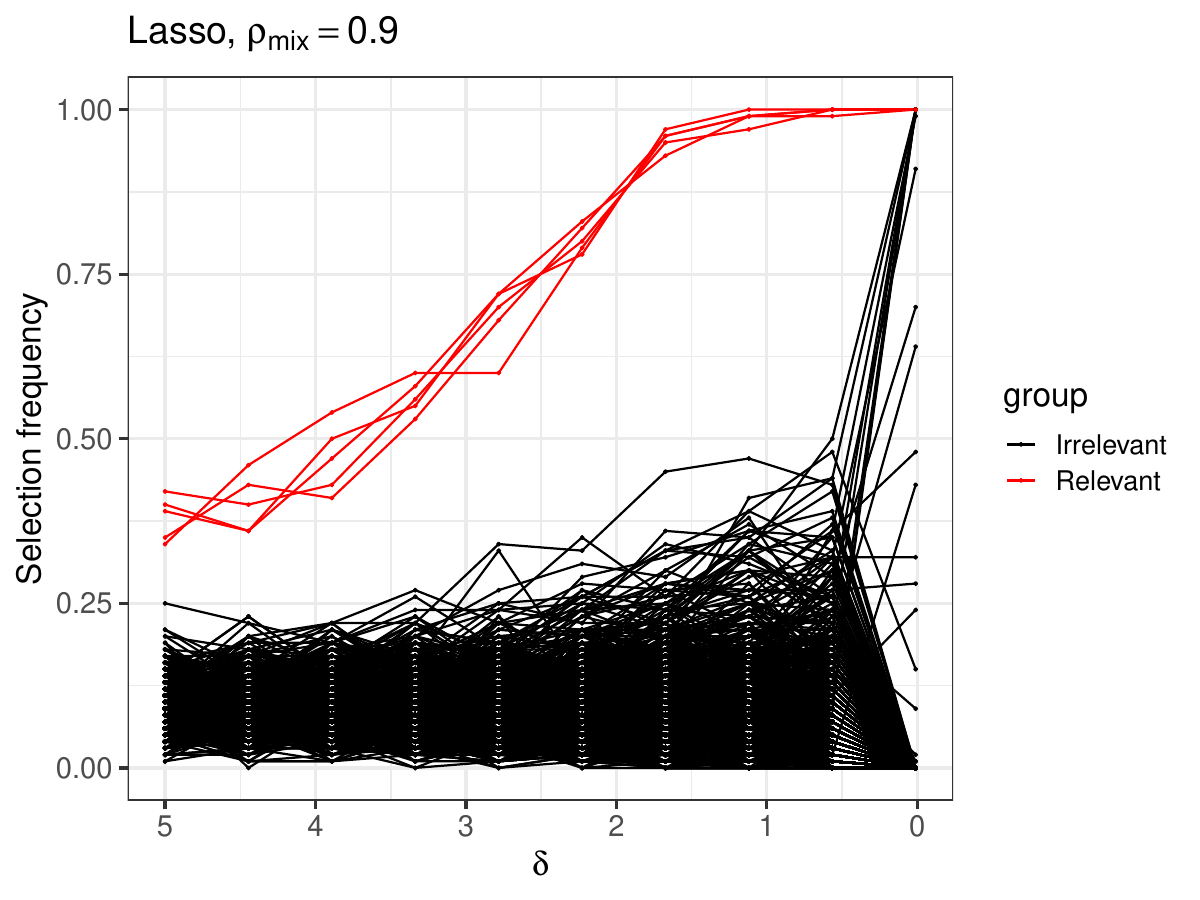}} 
\caption{Selection frequency of relevant (red) and irrelevant (black) features for Lasso under increasing design noise level $\delta$} \label{fig:grid3x3} \end{figure}

\section{Methodology}\label{s2}

\subsection{Lasso Setup and Small-Perturbation Robustness}\label{s2:small}

We work in the linear regression setting of Section~\ref{s1} and use the Lasso as our running example. Background material on the Lasso and the perturbation algebra is provided in the appendix.
For the Lasso, exact support recovery is typically established under the  Irrepresentable Condition (IC) and a minimum signal condition, together with a choice of $\lambda$ in an appropriate range \citep{zhao2006model, meinshausen2006high, buhlmann2011statistics}. The IC can be stated as follows. Let $\Sigma \in \mathbb{R}^{p \times p}$ be a Gram matrix, i.e., $\Sigma = \tfrac{1}{n} X^\top X$, and let $S \subset \{1,\dots,p\}$ be the active set and $S^c$ its complement. Suppose $\Sigma_{SS}$, the sub-matrix of $\Sigma$ indexed by $S$, is invertible and
\[
\|\Sigma_{S^c S}\Sigma_{SS}^{-1}\|_\infty \le 1 - \eta,
\]
for some $\eta \in (0,1)$. This condition constrains the dependence between relevant and irrelevant features relative to the dependence within the relevant set.
\begin{lemma}[Small perturbation preserves irrepresentability]\label{l1}
Let $S \subset \{1,\dots,p\}$ denote the active set and suppose $\Sigma_{SS}$ is invertible. Assume the IC holds for $\Sigma$, $\|\Sigma_{S^c S}\Sigma_{SS}^{-1}\|_\infty \le 1 - \eta$, for some $\eta \in (0,1)$.
Let $\Sigma^{(\delta)} = \Sigma + \Delta$ satisfy $\|\Delta\|_\infty \le C_1 \delta + C_2 \delta^2$, for some $C_1,C_2 > 0$.
Then there exists $\delta_0=\delta_0(\Sigma_{SS},C_1,C_2,\eta)>0$ such that for all $0 \le \delta \le \delta_0$, $\Sigma^{(\delta)}_{SS}$ is invertible and
\[
\left\|\Sigma^{(\delta)}_{S^c S}\big(\Sigma^{(\delta)}_{SS}\big)^{-1}\right\|_\infty
\le 1 - \frac{\eta}{2}.
\]
Consequently, the IC continues to hold under perturbation with a reduced margin $1-\tilde{\eta}$ for some $\tilde{\eta}\in(0,\eta)$ depending on $\Sigma_{SS}$ and the perturbation constants.
\end{lemma}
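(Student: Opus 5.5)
This is a continuity argument for the map $\Sigma\mapsto \Sigma_{S^cS}\Sigma_{SS}^{-1}$, carried out entirely in the $\ell_\infty$-operator norm (maximum absolute row sum) and combined with a Neumann series. Write $A=\Sigma_{SS}$, $B=\Sigma_{S^cS}$, $E=\Delta_{SS}$ and $F=\Delta_{S^cS}$. Let $\epsilon(\delta)=C_1\delta+C_2\delta^2$. We read $\|\Delta\|_\infty$ as the operator $\infty$-norm. If it is meant entrywise, we first multiply by $\max(|S|,p-|S|)$ to convert to row sums, and this factor is absorbed into the constants.

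Since a submatrix has row sums no larger than those of the full matrix, $\|E\|_\infty\le\epsilon(\delta)$ and $\|F\|_\infty\le\epsilon(\delta)$. Set $\kappa=\|A^{-1}\|_\infty$.

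\textbf{Step 1 (invertibility).} Write $A+E=A(I+A^{-1}E)$. If $\kappa\,\epsilon(\delta)\le 1/2$, then $\|A^{-1}E\|_\infty\le 1/2$, so the Neumann series converges. Hence $A+E$ is invertible with $\|(A+E)^{-1}\|_\infty\le 2\kappa$. We also get the resolvent identity $(A+E)^{-1}-A^{-1}=-(A+E)^{-1}EA^{-1}$, which gives $\|(A+E)^{-1}-A^{-1}\|_\infty\le 2\kappa^2\epsilon(\delta)$.

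\textbf{Step 2 (decomposition).} Expand
\[
(B+F)(A+E)^{-1}-BA^{-1}=B\bigl[(A+E)^{-1}-A^{-1}\bigr]+F(A+E)^{-1}.
\]
By submultiplicativity this is bounded by $\|B\|_\infty 2\kappa^2\epsilon+2\kappa\epsilon$. To control $\|B\|_\infty$ using only the quantities the lemma allows $\delta_0$ to depend on, use $B=(BA^{-1})A$. This gives $\|B\|_\infty\le(1-\eta)\|A\|_\infty\le\|A\|_\infty$, which depends only on $\Sigma_{SS}$ and $\eta$. The perturbation of the IC quantity is therefore at most $K\epsilon(\delta)$, with $K=2\kappa^2\|A\|_\infty+2\kappa$.

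\textbf{Step 3 (choice of $\delta_0$).} The triangle inequality gives
\[
\|\Sigma^{(\delta)}_{S^cS}(\Sigma^{(\delta)}_{SS})^{-1}\|_\infty\le 1-\eta+K\epsilon(\delta).
\]
We need $\epsilon(\delta)\le\min\{1/(2\kappa),\,\eta/(2K)\}=:\bar\epsilon$. Because $\epsilon$ is increasing with $\epsilon(0)=0$, we can take $\delta_0$ to be the positive root of $C_2\delta^2+C_1\delta=\bar\epsilon$. This $\delta_0$ depends only on $\Sigma_{SS},C_1,C_2,\eta$, and it yields the bound $1-\eta/2$. The final claim then follows with $\tilde\eta=\eta/2$, or more sharply with $\tilde\eta=\eta-K\epsilon(\delta)$.

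The main obstacle is Step 2: making sure that neither $\|B\|_\infty$ nor the norm interpretation introduces dependence on quantities outside $(\Sigma_{SS},C_1,C_2,\eta)$. The factorization $B=(BA^{-1})A$ handles the first issue, and reading $\|\Delta\|_\infty$ as an operator norm handles the second.
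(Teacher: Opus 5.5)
Your argument is correct and follows essentially the same route as the paper's proof: Neumann-series invertibility of $\Sigma_{SS}+\Delta_{SS}$ with $\|(\Sigma^{(\delta)}_{SS})^{-1}\|_\infty\le 2\|\Sigma_{SS}^{-1}\|_\infty$, the resolvent identity, and a triangle-inequality split of the IC quantity, differing only cosmetically in the algebraic split and in handling $C_1\delta+C_2\delta^2$ through its positive root rather than absorbing the quadratic term for $\delta\le 1$. Your extra step $\|\Sigma_{S^cS}\|_\infty\le\|\Sigma_{S^cS}\Sigma_{SS}^{-1}\|_\infty\|\Sigma_{SS}\|_\infty\le(1-\eta)\|\Sigma_{SS}\|_\infty$ is a genuine improvement: the paper's constant $a$ contains $\|\Sigma_{S^cS}\|_\infty$, so its $\delta_0$ does not actually depend only on $(\Sigma_{SS},C_1,C_2,\eta)$ as the lemma asserts, whereas yours does.
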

Proof is shown in the appendix.
Therefore, Lemma~\ref{l1} implies that if the original design satisfies the IC, then this property is preserved under sufficiently small additive perturbations. We next extend this deterministic statement to a probabilistic setting by modeling the additive measurement noise as Gaussian, and derive a high-probability analog of Lemma~\ref{l1}.
The deterministic assumption $\|\Delta\|_\infty \le C\delta$ serves as an envelope for the stochastic regime. In the Gaussian setting of Section~\ref{s2:hp}, we obtain with high probability a bound of the form $\|\Delta\|_\infty \le \varepsilon$,
where $\varepsilon = p(t_1 + t_2)$. In the sequel, we invoke Lemma~\ref{l1} with $C\delta$ replaced by $\varepsilon$, i.e., treating $\varepsilon$ as the effective perturbation size.

\subsection{Gaussian Perturbations and High-Probability Robustness}\label{s2:hp}

In this section, we extend the results of Section~\ref{s2:small} to a probabilistic setting by considering Gaussian perturbations of $X$. We apply Lemma~\ref{l1} with $\Sigma^{(\delta)}$ replaced by the centered matrix
$\widetilde{\Sigma}^{(\delta)}$, since the lemma only requires a bound on
$\|\Delta\|_\infty$, where $\Delta=\widetilde{\Sigma}^{(\delta)}-\Sigma$.
\begin{theorem}[Gaussian perturbations preserve irrepresentability with high probability] \label{thr:hp_gauss_final}
Fix a deterministic design matrix $X \in \mathbb{R}^{n \times p}$ and let $X^{(\delta)} = X + W$, and $W_{ik} \overset{\text{iid}}{\sim} \mathcal{N}(0,\delta^2)$. Define
$\Sigma = \frac{1}{n}X^\top X$, and 
$\widetilde \Sigma^{(\delta)} = \frac{1}{n}(X^{(\delta)})^\top X^{(\delta)} - \delta^2 I_p$.
Assume $\frac{1}{n}\sum_{i=1}^n X_{ij}^2 \le M^2$ for all $j$. Let $S$ be the active set and suppose $\|\Sigma_{S^c S}\Sigma_{SS}^{-1}\|_\infty \le 1 - \eta$, for $\eta\in(0,1)$. Let $\alpha\in(0,1)$, and let $\delta$ be such that $\varepsilon \coloneq p(t_1+t_2)\le \varepsilon_0$, where $\varepsilon_0=\varepsilon_0(\Sigma_{SS},\eta)$ is any constant small enough so that Lemma~\ref{l1} applies with $\|\Delta\|_\infty\le \varepsilon_0$.
Then, under the above conditions,
\[
\mathbb{P}\!\left(
\left\|\widetilde \Sigma^{(\delta)}_{S^c S}
(\widetilde \Sigma^{(\delta)}_{SS})^{-1}\right\|_\infty
\le 1 - \frac{\eta}{2}
\right)
\ge 1 - {\alpha}.
\]
\end{theorem}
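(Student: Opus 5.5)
The plan is to decompose the perturbation $\Delta = \widetilde\Sigma^{(\delta)} - \Sigma$ into a cross term and a centered quadratic term, bound each entrywise with high probability, and then invoke Lemma~\ref{l1} deterministically on the resulting event. Expanding $(X+W)^\top(X+W)$ gives
\[
\Delta = \tfrac{1}{n}\big(X^\top W + W^\top X\big) + \Big(\tfrac{1}{n}W^\top W - \delta^2 I_p\Big) =: \Delta_1 + \Delta_2 .
\]
The matrix $\infty$-norm is the maximum absolute row sum, so $\|\Delta\|_\infty \le p\max_{j,k}|\Delta_{jk}|$. The thresholds are then chosen as follows: $t_1$ bounds $\max_{j,k}|(\Delta_1)_{jk}|$ and $t_2$ bounds $\max_{j,k}|(\Delta_2)_{jk}|$, each failing with probability at most $\alpha/2$. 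This yields $\|\Delta\|_\infty \le p(t_1+t_2)=\varepsilon$ on an event of probability at least $1-\alpha$.

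\emph{Cross term.} For fixed $(j,k)$, $\tfrac1n\sum_i X_{ij}W_{ik}$ is exactly Gaussian with variance $\delta^2 \tfrac{1}{n^2}\sum_i X_{ij}^2 \le \delta^2M^2/n$. The entry $(\Delta_1)_{jk}$ is the sum of two such terms, so its sub-Gaussian variance proxy is at most $4\delta^2M^2/n$. A Gaussian tail bound and a union bound over the $p^2$ entries give
\[
t_1 = 2\delta M\sqrt{\tfrac{2\log(4p^2/\alpha)}{n}} .
\]

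\emph{Quadratic term.} Off-diagonal entries $\tfrac1n\sum_i W_{ij}W_{ik}$ are averages of products of independent $\mathcal N(0,\delta^2)$ variables. Diagonal entries $\tfrac1n\sum_i(W_{ij}^2-\delta^2)$ are centered scaled $\chi^2_n$ variables. Both are sub-exponential with parameters of order $\delta^2$, so Bernstein's inequality (or Laurent--Massart for the diagonal) plus a union bound gives
\[
t_2 = C\delta^2\Big(\sqrt{\tfrac{\log(4p^2/\alpha)}{n}} + \tfrac{\log(4p^2/\alpha)}{n}\Big)
\]
for an absolute constant $C$. The centering by $\delta^2 I_p$ is essential here. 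Without it, the diagonal would carry a deterministic bias $\delta^2$, and $\widetilde\Sigma^{(\delta)}$ was defined precisely to remove it.

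\emph{Conclusion.} On the intersection event, $\|\Delta\|_\infty\le\varepsilon\le\varepsilon_0$. By the choice of $\varepsilon_0$, Lemma~\ref{l1}, read as a deterministic statement about any perturbation with $\|\Delta\|_\infty\le\varepsilon_0$, gives invertibility of $\widetilde\Sigma^{(\delta)}_{SS}$ and the bound $1-\eta/2$. This requires checking that the lemma's proof uses only the size of $\|\Delta\|_\infty$, not the specific form $C_1\delta+C_2\delta^2$. It should, since the argument is a Neumann-series bound: $\|(\Sigma_{SS}+\Delta_{SS})^{-1}-\Sigma_{SS}^{-1}\|_\infty$ is controlled once $\|\Sigma_{SS}^{-1}\|_\infty\|\Delta_{SS}\|_\infty<1/2$, together with a triangle inequality on $\Sigma_{S^cS}+\Delta_{S^cS}$. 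The one fact used is that the $\infty$-norm of a submatrix is at most that of the full matrix.

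The main obstacle is the quadratic term. It requires a clean sub-exponential concentration statement for Gaussian products and chi-square deviations with explicit constants. Everything else is a direct union bound and the application of Lemma~\ref{l1}.
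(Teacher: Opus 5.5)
Your proposal is correct and follows the paper's proof step for step. Both arguments split $\Delta=\Delta_1+\Delta_2$, bound the cross term with Gaussian tails and a union bound, and bound the centered quadratic term with sub-exponential Bernstein, each at level $\alpha/2$; both then pass from entries to row sums at the cost of a factor $p$ and apply Lemma~\ref{l1} deterministically on the resulting event. Your handling of the cross term is slightly cleaner: you bound the entries of the symmetrized matrix $\Delta_1$ directly, with variance proxy $4\delta^2M^2/n$, which makes explicit why $t_1$ carries the factor $2$, whereas the paper bounds only the entries of $X^\top W/n$ before asserting $\|A\|_\infty\le p\,t_1$.
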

Proof is provided in the appendix, together with Lemma~2, which is used in the proof.
Although the IC is a classical consistency condition for Lasso model-selection consistency, closely related ``near-orthogonality" assumptions between active and inactive features also arise in other selection procedures (e.g., forward selection; \citet{https://doi.org/10.1002/cjs.11635}). Moreover, even methods designed for correlated designs typically require consistency conditions of a similar flavor (see, for instance, \citet{10.1093/bioinformatics/bty750} and \citet{pmlr-v84-takada18a}).  

We now turn to larger perturbations. Figure~\ref{fig:grid3x3} suggests that, given a fixed $\lambda$, as the noise level increases, the selection frequencies of all features decrease. Remark~\ref{r2} (in the appendix) indicates that for large $\delta$, the induced errors-in-variables effect can overwhelm the signal, leading to vanishing selection frequencies for all features. From a stability perspective, performing bagging at each noise level $\delta$ improves stability with respect to random sampling at the corresponding perturbation magnitude \citep{meinshausen2010stability, JMLR:v25:23-0536}. We additionally consider a second notion of stability, namely robustness to measurement error, by selecting features under a more stringent measurement-error stress test. Motivated by the empirical results in Figure~\ref{fig:grid3x3}, we propose an additional layer of bagging over measurement error to further promote stability under this criterion.

\subsection{$\delta$-Averaged Selection Frequencies and Support Recovery}\label{s2:delta_avg}

Motivated by Figure~\ref{fig:grid3x3}, we move beyond the small-$\delta$ regime and study how selection
behaves along a path of increasing measurement noise. Fix a tuning parameter $\lambda$ that does not depend
on $\delta$, and for each perturbation level $\delta\ge 0$ consider the perturbed design
$X^{(\delta)} = X + W$ with $W_{ik}\overset{\mathrm{iid}}{\sim}\mathcal N(0,\delta^2)$. Let
$\hat\beta^{(\delta)}=\hat\beta^{(\delta)}(\lambda)$ denote the fitted coefficient vector obtained by applying the
chosen selector to $(X^{(\delta)},Y)$.

We quantify stability with respect to measurement error via selection frequencies. Treating $(X,Y)$ as fixed
throughout, define for each feature $j$ and perturbation level $\delta$ the population selection
frequency $f_j^{(\delta)} \coloneq \mathbb{P}\!(\hat\beta^{(\delta)}_j\neq 0)$, where the probability is over $W$ (and any algorithmic randomness). In practice, we approximate
$f_j^{(\delta)}$ by Monte Carlo bagging: for each $\delta$, generate i.i.d.\ perturbations
$\{W^{(b)}\}_{b=1}^B$ with entries $\mathcal N(0,\delta^2)$, fit the selector on
$(X+W^{(b)},Y)$, and form the empirical frequency $\hat f_j^{(\delta)} \coloneq \frac{1}{B}\sum_{b=1}^B 1\!\left\{\hat\beta^{(b,\delta)}_j\neq 0\right\}$.

\subsubsection*{Averaging over $\delta$}
Figure~\ref{fig:grid3x3} and Remark~\ref{r2} suggest that, as $\delta$ increases, selection frequencies tend to decrease, with a
potentially informative intermediate range in which relevant and irrelevant features are still separable by their
frequencies. To exploit robustness across noise magnitudes, we aggregate frequencies across a finite grid
$\mathcal G=\{\delta_1,\dots,\delta_m\}\subset[0,\infty)$ by the delta-averaged population and empirical
frequencies $\bar f_j \coloneq \frac{1}{m}\sum_{\delta\in\mathcal G} f_j^{(\delta)}$, and $\widehat{\bar f}_j \coloneq \frac{1}{m}\sum_{\delta\in\mathcal G} \hat f_j^{(\delta)}$. This averaging step rewards features that are selected consistently across perturbation magnitudes and reduces
sensitivity to the choice of any single $\delta$.

We now state a minimal condition under which thresholding the averaged empirical frequencies recovers the true
active set $S$.

\begin{assumption}[Separation of $\delta$-averaged selection frequencies]\label{as:avg_sep}
Let $S \subset \{1,\dots,p\}$ denote the active set and $S^c$ its complement. For a given grid of perturbation levels $\mathcal{G}= \{\delta_1,\dots, \delta_m\}$, define $\bar f_j$ as the selection frequency of feature $j$ averaged over all levels in $\mathcal{G}$. We require that there exists a constant $\gamma > 0$ such that $\min_{j \in S} \bar f_j
\;\ge\;
\max_{k \in S^c} \bar f_k + \gamma$.
\end{assumption}

Assumption~\ref{as:avg_sep} is a frequency-domain analog of classical separability conditions in model selection. It requires that every active feature is selected more frequently, on average across perturbation levels in $\mathcal G$, than every inactive feature, with a strictly positive gap $\gamma$ separating the two groups. Intuitively, this acts as a clean score threshold: across perturbation scales, active features consistently score higher than inactive ones by at least $\gamma$. This separation is what enables reliable variable selection. In applications, $\mathcal G$ is chosen to cover a range
of perturbation magnitudes that is large enough to stress-test robustness yet not so large that all frequencies
collapse toward zero (see, 
Remark~\ref{r2}).

We now state a model-agnostic recovery guarantee: under Assumption~\ref{as:avg_sep}, thresholding the delta-averaged empirical selection frequencies recovers the true support $S$ with high probability.

\begin{theorem}[Support recovery from $\delta$-averaged empirical frequencies]\label{thm:delta_avg_recovery}
Assume Assumption~\ref{as:avg_sep} holds with margin $\gamma>0$. Let $\alpha\in(0,1)$, where $m = |\mathcal G|$ is the number of perturbation levels and $B$ is the number of design realizations generated at each perturbation level. Define
$\varepsilon \coloneq \sqrt{\frac{\log(2mp/\alpha)}{2B}}$. Then, with probability at least $1-\alpha$ over the perturbations used to compute
$\{\hat f_j^{(\delta)}\}_{\delta\in\mathcal G,\,1\le j\le p}$,
\[
\min_{j\in S}\widehat{\bar f}_j
\;\ge\;
\max_{k\in S^c}\widehat{\bar f}_k
\;+\;(\gamma-2\varepsilon).
\]
In particular, if $\varepsilon<\gamma/2$, then there exists a threshold $\tau$ such that
\[
\hat S(\tau)\coloneq\{j:\widehat{\bar f}_j\ge \tau\}=S.
\]
\end{theorem}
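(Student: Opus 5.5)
The plan is to apply Hoeffding's inequality at each pair $(\delta,j)$, take a union bound over the $mp$ pairs, and then use a triangle-inequality argument to pass from population to empirical averaged frequencies. Fix $\delta\in\mathcal G$ and a feature $j$. With $(X,Y)$ held fixed and $\lambda$ not depending on $\delta$, the indicators $1\{\hat\beta^{(b,\delta)}_j\neq 0\}$, $b=1,\dots,B$, are i.i.d.\ Bernoulli with mean $f_j^{(\delta)}$, because the $W^{(b)}$ are i.i.d.\ (and so is any algorithmic randomness). These indicators take values in $[0,1]$. Hoeffding's inequality therefore gives
\[
\mathbb P\bigl(|\hat f_j^{(\delta)}-f_j^{(\delta)}|>\varepsilon\bigr)\le 2\exp(-2B\varepsilon^2)=\frac{\alpha}{mp},
\]
where the equality is exactly our choice $\varepsilon=\sqrt{\log(2mp/\alpha)/(2B)}$. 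A union bound over the $mp$ pairs $(\delta,j)$ then shows that, with probability at least $1-\alpha$, the uniform event $\mathcal E=\{\max_{\delta\in\mathcal G,\,j}|\hat f_j^{(\delta)}-f_j^{(\delta)}|\le\varepsilon\}$ holds. No independence across different $\delta$ or $j$ is needed, which is convenient because the same $W^{(b)}$ produces all $p$ indicators at a given level.

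On $\mathcal E$ we average over $\mathcal G$ and use the triangle inequality to get $|\widehat{\bar f}_j-\bar f_j|\le \frac1m\sum_{\delta}|\hat f_j^{(\delta)}-f_j^{(\delta)}|\le\varepsilon$ for every $j$. For every $j\in S$ and $k\in S^c$, Assumption~\ref{as:avg_sep} then gives
\[
\widehat{\bar f}_j\ge \bar f_j-\varepsilon\ge \bar f_k+\gamma-\varepsilon\ge \widehat{\bar f}_k+\gamma-2\varepsilon.
\]
Taking the minimum over $j\in S$ and the maximum over $k\in S^c$ yields the displayed inequality.

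For the second claim, suppose $\varepsilon<\gamma/2$. On $\mathcal E$ the gap $\gamma-2\varepsilon$ is strictly positive. Choose any $\tau$ in the interval $(\max_{k\in S^c}\widehat{\bar f}_k,\ \min_{j\in S}\widehat{\bar f}_j]$; for instance, $\tau=\min_{j\in S}\widehat{\bar f}_j$ works. Then every $j\in S$ satisfies $\widehat{\bar f}_j\ge\tau$ and every $k\in S^c$ satisfies $\widehat{\bar f}_k<\tau$, so $\hat S(\tau)=S$. If one prefers a threshold that does not depend on the data, $\tau=\min_{j\in S}\bar f_j-\varepsilon$ also works on $\mathcal E$, since $\max_{k\in S^c}\widehat{\bar f}_k\le\min_{j\in S}\bar f_j-\gamma+\varepsilon<\tau$.

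There is no substantial obstacle in this argument. The one point that needs care is justifying that, for each fixed $(\delta,j)$, the $B$ indicators are i.i.d.\ with mean exactly $f_j^{(\delta)}$. This requires that $(X,Y)$ and $\lambda$ are held fixed, so that the only randomness comes from the perturbations; the statement stipulates exactly this. The union bound takes care of the dependence across features and levels, and the factor $2mp$ inside the logarithm in $\varepsilon$ is what pays for it.
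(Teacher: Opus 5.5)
Your proposal is correct and follows the paper's proof essentially step for step: Hoeffding for each pair $(\delta,j)$, a union bound over the $mp$ pairs calibrated by the choice of $\varepsilon$, the triangle inequality to carry the uniform bound through the $\delta$-average, and then Assumption~\ref{as:avg_sep} to obtain the $\gamma-2\varepsilon$ gap and any $\tau\in(\max_{k\in S^c}\widehat{\bar f}_k,\min_{j\in S}\widehat{\bar f}_j]$. Your additional perturbation-independent threshold $\tau=\min_{j\in S}\bar f_j-\varepsilon$ is a valid minor extra that the paper does not state.
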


Proof is shown in the appendix. Theorem~\ref{thm:delta_avg_recovery} provides a formal guarantee that the proposed procedure achieves exact support recovery with high probability under a quantifiable error budget. The empirical separation between active and inactive features is preserved up to the deviation term $2\varepsilon$, ensuring that the ordering induced by the $\delta$-averaged frequencies remains stable. In particular, exact recovery is possible when the stochastic error is sufficiently small relative to the margin $\gamma$, i.e., $\varepsilon < \gamma/2$. Moreover, the bound shows that the error decreases as $B$ increases, reflecting improved accuracy with additional Monte Carlo sampling across perturbations and design realizations.

\subsubsection*{Thresholding by Estimated Model Size}
Since the true model size $s=|S|$ is unknown in practice, we estimate $s$ by locating the largest drop in the
sorted $\delta$-averaged empirical frequencies, following \citet{Huang17022026}. Under the separation guarantee of Theorem~\ref{thm:delta_avg_recovery}, the averaged frequencies admit a strict
partition: with high probability, all $j\in S$ have larger $\widehat{\bar f}_j$ than all $k\in S^c$.
Consequently, the sorted sequence $\widehat{\bar f}_{(1)}\ge\cdots\ge\widehat{\bar f}_{(p)}$ exhibits a
pronounced drop at the boundary between the two groups. This motivates estimating the split point by the largest
adjacent gap and using the midpoint between the two neighboring values as a data-driven threshold. We define
\[
\hat s \coloneq \arg\max_{1\le k<p}\big(\widehat{\bar f}_{(k)}-\widehat{\bar f}_{(k+1)}\big),
\qquad
\hat\tau \coloneq \frac{1}{2}\big(\widehat{\bar f}_{(\hat s)}+\widehat{\bar f}_{(\hat s+1)}\big),
\qquad
\hat S \coloneq \{j:\widehat{\bar f}_j\ge \hat\tau\}.
\]
We summarize the full procedure in Algorithm~\ref{alg:delta_averaged_selection} in the appendix.

\section{Numerical Results}\label{s3}
\subsection*{Synthetic Data}
We conduct a Monte Carlo simulation with $n_{\mathrm{rep}}=200$ independent replications in a high-dimensional setting with training sample size $n=100$ and $p=1{,}000$ features. The true active set is $S=\{200,400,600,800,1000\}$, with coefficients $\beta_j\in\{5,4,3,2,1\}$ on $S$ (in that order) and $\beta_j=0$ otherwise. For each replication, we generate a latent design $X_0\in\mathbb{R}^{n\times p}$ from a centered multivariate normal distribution with a structured covariance matrix: correlations among relevant features are set to $\rho_{\mathrm{rel}}=0.5$, among irrelevant features to $\rho_{\mathrm{irr}}=0.05$, and between relevant and irrelevant features to $\rho_{\mathrm{mix}}=0.4$, followed by projection to the nearest positive definite matrix. The resulting design is standardized by centering and scaling each column using its empirical mean and standard deviation. 

To model measurement error in the covariates, we observe $X = X_0 + W$, where $W_{ik}\stackrel{\mathrm{iid}}{\sim}\mathcal{N}(0,\delta_{\mathrm{obs}}^2)$ and $\delta_{\mathrm{obs}}\in\{0,0.5,1,1.5,2\}$. The response is generated from the latent design, $y = X_0\beta + \varepsilon$, where $\varepsilon\sim\mathcal{N}(0, I_n)$, so that perturbations affect only the observed predictors.

For each $\delta_{\mathrm{obs}}$ and replication, we fit (i) the Lasso and (ii) ENet with $\alpha=1$ and $\alpha=0.5$, respectively, using a fixed regularization level $\lambda=\sqrt{\log(p)/n}$. We further consider Stability Selection based on half-sample sub-sampling with $B=100$ resamples and thresholds $\tau\in\{0.6,0.7,0.8,0.9\}$ for both models. Finally, we apply our jitter-based procedure that adds Gaussian perturbations over a grid $\delta\in\{0.05,\dots,2.5\}$ (10 equispaced values), with $B=100$ perturbations per grid point, selecting the top $|S|=5$ features according to averaged selection frequencies (oracle model size). In the next experiment, we use the data-driven variant of our method.

Performance is evaluated using the F1 score for support recovery and a selection stability measure computed from binary selection matrices across replications \citep{nogueira2018stability}. The simulation is implemented in R and executed on a MacBook Air 13-inch with parallelization over 7 cores.

In the oracle setting (Table~\ref{tab:merged_oracle_datadriven}, left panel), performance degrades monotonically as $\delta_{\mathrm{obs}}$ increases, consistent with measurement-error effects. At $\delta_{\mathrm{obs}}=0$, all aggregation-based methods achieve near-perfect recovery, while Lasso and ENet fail substantially (F1 $\le 0.50$), indicating that aggregation alone suffices in the noise-free regime. At $\delta_{\mathrm{obs}}=0.5$, Stability Selection remains strong (F1 $\approx 0.95$--$0.99$), but jitter attains perfect recovery, while single-fit methods collapse. At $\delta_{\mathrm{obs}}=1$, jitter is best (0.94/0.97). Stability Selection becomes threshold-sensitive, with intermediate thresholds performing best and high thresholds degrading. For $\delta_{\mathrm{obs}}\ge 1.5$, all methods deteriorate sharply. Jitter remains most robust, achieving the highest or near-highest F1 across high-noise regimes, while Stability Selection and single-fit estimators collapse.

In the data-driven setting (Table~\ref{tab:merged_oracle_datadriven}, right panel), the largest-gap rule closely matches the oracle benchmark in the low-noise regime. For $\delta_{\mathrm{obs}}\leq 0.5$, jitter attains perfect recovery (F1 $=1.00$) and maximal stability, while standard Lasso and ENet remain poor, confirming that the gain is driven by aggregation rather than oracle tuning. At $\delta_{\mathrm{obs}}=1$, performance begins to separate: jitter remains competitive (0.87/0.93), but no longer dominates the best stability-selection variants, indicating the onset of model-size estimation error. For $\delta_{\mathrm{obs}}\geq 1.5$, all methods deteriorate sharply. The main failure mode for jitter is threshold estimation becoming unstable as averaged selection frequencies flatten and the largest-gap signal weakens. In this regime, Stability Selection, especially lower-threshold variants often achieves higher F1 and comparable or better stability. Overall, the data-driven results show that $\delta$-averaging remains effective without oracle information, and that robustness at high noise is primarily limited by size-threshold estimation.

\begin{table*}[htbp]
\centering
\scriptsize
\caption{Stability and F1 across $\delta_{\mathrm{obs}}\in\{0,0.5,1.0,1.5,2.0\}$. Left panel: oracle variant. Right panel: data-driven variant. Bold indicates the maximum within each $\delta_{\mathrm{obs}}$ separately for Stability and F1 within each panel.}
\label{tab:merged_oracle_datadriven}
\setlength{\tabcolsep}{3pt}
\begin{tabular}{c l cc|cc}
\hline
\multicolumn{4}{c|}{Oracle} & \multicolumn{2}{c}{Data-driven} \\
\hline
$\delta_{\mathrm{obs}}$ & Method & Stability & F1 & Stability & F1 \\
\hline
0.0 & Lasso        & 0.32 & 0.50 & 0.32 & 0.50 \\
0.0 & ENet         & 0.11 & 0.18 & 0.11 & 0.18 \\
0.0 & StabL\_0.6   & 0.99 & \textbf{1.00} & 0.99 & \textbf{1.00} \\
0.0 & StabL\_0.7   & \textbf{1.00} & \textbf{1.00} & \textbf{1.00} & \textbf{1.00} \\
0.0 & StabL\_0.8   & \textbf{1.00} & \textbf{1.00} & \textbf{1.00} & \textbf{1.00} \\
0.0 & StabL\_0.9   & \textbf{1.00} & \textbf{1.00} & \textbf{1.00} & \textbf{1.00} \\
0.0 & StabEN\_0.6  & 0.92 & 0.96 & 0.93 & 0.97 \\
0.0 & StabEN\_0.7  & 0.98 & 0.99 & 0.98 & 0.99 \\
0.0 & StabEN\_0.8  & \textbf{1.00} & \textbf{1.00} & \textbf{1.00} & \textbf{1.00} \\
0.0 & StabEN\_0.9  & \textbf{1.00} & \textbf{1.00} & \textbf{1.00} & \textbf{1.00} \\
0.0 & Jitter       & \textbf{1.00} & \textbf{1.00} & \textbf{1.00} & \textbf{1.00} \\
\hline
0.5 & Lasso        & 0.11 & 0.13 & 0.11 & 0.13 \\
0.5 & ENet         & 0.12 & 0.10 & 0.11 & 0.10 \\
0.5 & StabL\_0.6   & 0.91 & 0.96 & 0.91 & 0.95 \\
0.5 & StabL\_0.7   & 0.97 & 0.98 & 0.98 & 0.99 \\
0.5 & StabL\_0.8   & 0.98 & 0.99 & 0.97 & 0.98 \\
0.5 & StabL\_0.9   & 0.94 & 0.96 & 0.94 & 0.96 \\
0.5 & StabEN\_0.6  & 0.82 & 0.91 & 0.83 & 0.92 \\
0.5 & StabEN\_0.7  & 0.95 & 0.98 & 0.96 & 0.98 \\
0.5 & StabEN\_0.8  & 0.99 & 0.99 & 0.99 & \textbf{1.00} \\
0.5 & StabEN\_0.9  & 0.98 & 0.99 & 0.98 & 0.99 \\
0.5 & Jitter       & \textbf{1.00} & \textbf{1.00} & \textbf{1.00} & \textbf{1.00} \\
\hline
1.0 & Lasso        & 0.11 & 0.10 & 0.11 & 0.10 \\
1.0 & ENet         & 0.11 & 0.09 & 0.11 & 0.09 \\
1.0 & StabL\_0.6   & 0.83 & 0.91 & 0.82 & 0.91 \\
1.0 & StabL\_0.7   & 0.88 & 0.93 & 0.88 & \textbf{0.94} \\
1.0 & StabL\_0.8   & 0.84 & 0.90 & 0.83 & 0.89 \\
1.0 & StabL\_0.9   & 0.68 & 0.78 & 0.67 & 0.76 \\
1.0 & StabEN\_0.6  & 0.73 & 0.85 & 0.73 & 0.85 \\
1.0 & StabEN\_0.7  & 0.86 & 0.93 & 0.87 & 0.93 \\
1.0 & StabEN\_0.8  & 0.88 & 0.93 & \textbf{0.89} & \textbf{0.94} \\
1.0 & StabEN\_0.9  & 0.80 & 0.87 & 0.78 & 0.86 \\
1.0 & Jitter       & \textbf{0.94} & \textbf{0.97} & 0.87 & 0.93 \\
\hline
1.5 & Lasso        & 0.09 & 0.09 & 0.09 & 0.09 \\
1.5 & ENet         & 0.09 & 0.08 & 0.08 & 0.08 \\
1.5 & StabL\_0.6   & 0.58 & 0.76 & \textbf{0.57} & \textbf{0.75} \\
1.5 & StabL\_0.7   & 0.57 & 0.72 & 0.54 & 0.70 \\
1.5 & StabL\_0.8   & 0.44 & 0.59 & 0.44 & 0.59 \\
1.5 & StabL\_0.9   & 0.24 & 0.34 & 0.26 & 0.35 \\
1.5 & StabEN\_0.6  & 0.55 & 0.74 & 0.54 & 0.73 \\
1.5 & StabEN\_0.7  & 0.59 & 0.75 & \textbf{0.57} & 0.74 \\
1.5 & StabEN\_0.8  & 0.52 & 0.67 & 0.50 & 0.65 \\
1.5 & StabEN\_0.9  & 0.32 & 0.44 & 0.34 & 0.46 \\
1.5 & Jitter       & \textbf{0.61} & \textbf{0.78} & 0.50 & 0.68 \\
\hline
2.0 & Lasso        & 0.06 & 0.08 & 0.07 & 0.08 \\
2.0 & ENet         & 0.06 & 0.07 & 0.06 & 0.07 \\
2.0 & StabL\_0.6   & \textbf{0.33} & 0.55 & 0.28 & 0.51 \\
2.0 & StabL\_0.7   & 0.29 & 0.46 & 0.28 & 0.44 \\
2.0 & StabL\_0.8   & 0.20 & 0.29 & 0.17 & 0.27 \\
2.0 & StabL\_0.9   & 0.08 & 0.12 & 0.07 & 0.11 \\
2.0 & StabEN\_0.6  & 0.32 & 0.55 & 0.29 & \textbf{0.53} \\
2.0 & StabEN\_0.7  & 0.32 & 0.51 & \textbf{0.30} & 0.49 \\
2.0 & StabEN\_0.8  & 0.25 & 0.37 & 0.23 & 0.35 \\
2.0 & StabEN\_0.9  & 0.12 & 0.18 & 0.10 & 0.15 \\
2.0 & Jitter       & 0.32 & \textbf{0.56} & 0.19 & 0.46 \\
\hline
\end{tabular}
\end{table*}

\subsubsection*{Affymetrix Rat Genome 230 2.0 Array}

As a real-world example, we investigate `Affymetrix Rat Genome 230 2.0 Array' microarray data introduced by \citet{scheetz2006regulation}. This dataset comprises $n = 120$ twelve-week-old male rats, with expression levels recorded for nearly 32,000 gene probes for each rat. The primary objective of this analysis is to identify the probes most strongly associated with the expression level of the TRIM32 probe (\texttt{1389163\_at}), which has been linked to the development of Bardet-Biedl syndrome \citep{chiang2006homozygosity}. This genetically heterogeneous disorder affects multiple organ systems, including the retina. In accordance with the pre-processing steps outlined by \citet{huang2008adaptive}, we excluded gene probes with a maximum expression level below the $25$th percentile and those exhibiting an expression range smaller than $2$. This filtering process yielded a refined set of $p = 3,083$ gene probes that demonstrated sufficient expression and variability for further analysis.

With $\lambda=\lambda_{\mathrm{1se}}$ chosen by cross-validation, the data-driven jitter procedure estimated $\hat\tau=0.406$ and selected four probes (\texttt{1376747\_at}, \texttt{1389457\_at}, \texttt{1390539\_at}, \texttt{1392982\_at}), whereas sub-sampling-based Stability Selection with the Lasso at the same $\lambda$ selected none even with the most permissive decision threshold, that is, 0.6 (and likewise none when using $\lambda/2$). Although a fully correct Stability Selection analysis is typically performed over a grid of $\lambda$ values, it is also practically important to assess stability for a single carefully tuned model, as done here by fixing $\lambda$ via cross-validation.

\section{Conclusion}\label{s4}
We proposed a general framework for feature selection that is stable to both sampling variability and measurement error. The framework is model-agnostic, as supported by our empirical results, while our theoretical analysis focuses on the Lasso to characterize how small injected perturbations affect selection and to clarify when robustness can be expected. Several directions remain open. First, we analyzed additive Gaussian perturbations; extending the theory and practice to non-Gaussian noise (e.g., heavy-tailed or correlated) and to alternative error mechanisms beyond additivity is an important next step. Second, our procedure depends on an estimate of the true model size, and performance degrades when this quantity is misspecified. Developing more accurate and broadly applicable strategies for estimating the model size, ideally with finite-sample guarantees, is a key avenue for future work.

\section*{Competing Interests}
The authors declare that they have no conflict of interest.

\section*{Author Contributions Statement}
 
Mahdi Nouraie was responsible for drafting the manuscript, the development of the research methodology and for writing the computer code used throughout. Samuel Muller and Houying Zhu provided critical feedback on the content of the manuscript, refining the clarity and scope of the manuscript and the computer code.  

\section*{Data Availability}
The rat microarray data can be obtained from the National Center for Biotechnology Information (NCBI) website at \url{www.ncbi.nlm.nih.gov}, under accession number GSE5680.


\section*{Acknowledgments}
Mahdi Nouraie was supported by the Macquarie University Research Excellence Scholarship (20213605). Samuel Muller and Houying Zhu were supported by the Australian Research Council Discovery Project Grant (DP260100348).

\bibliographystyle{bst/sn-mathphys-num}  
\bibliography{sn-bibliography}

\begin{appendices}
\section*{Figure 1}
\label{apa}
\begin{figure}[H]
    \centering

    \subfloat[$\rho_{\text{mix}} = 0.001$]{\includegraphics[width=0.32\textwidth, height=0.32\textheight]{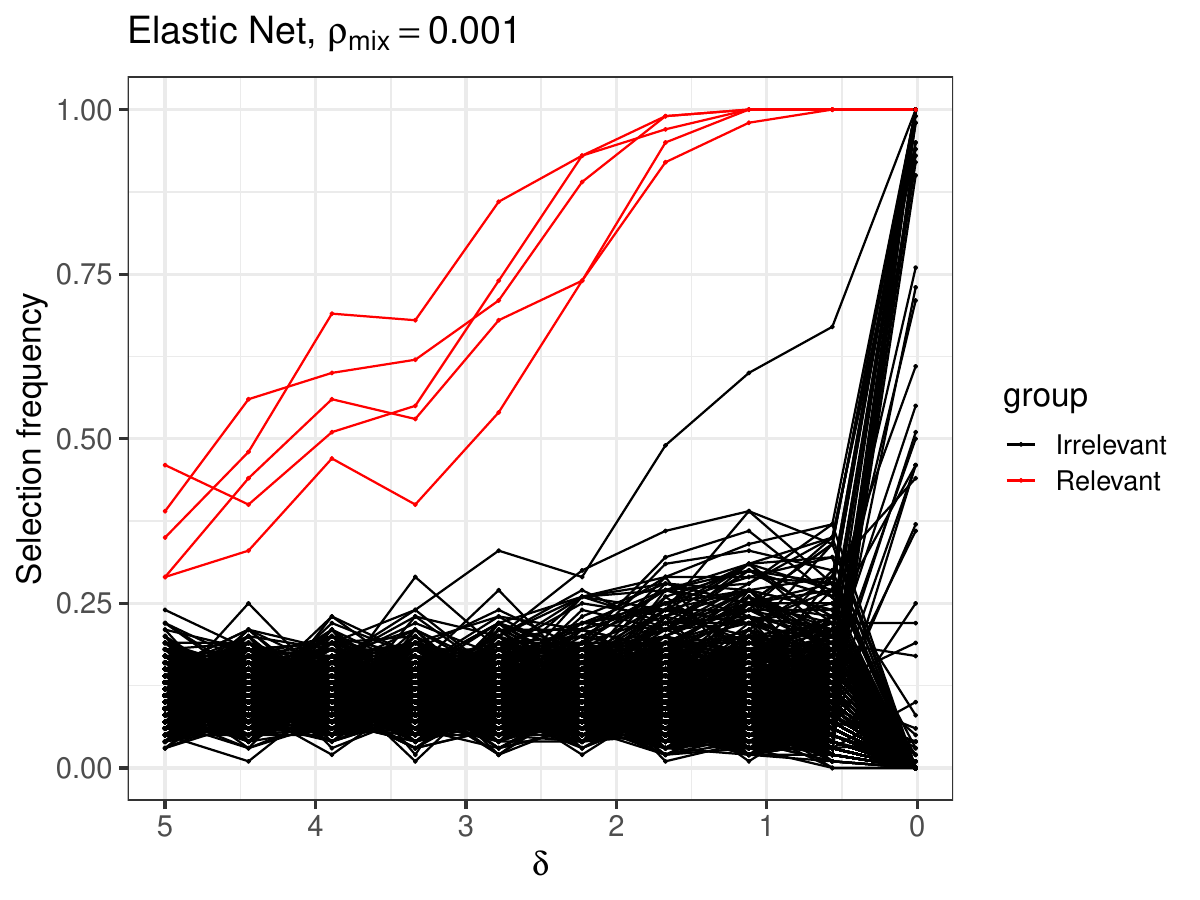}}
    \hfill
    \subfloat[$\rho_{\text{mix}} = 0.5$]{\includegraphics[width=0.32\textwidth, height=0.32\textheight]{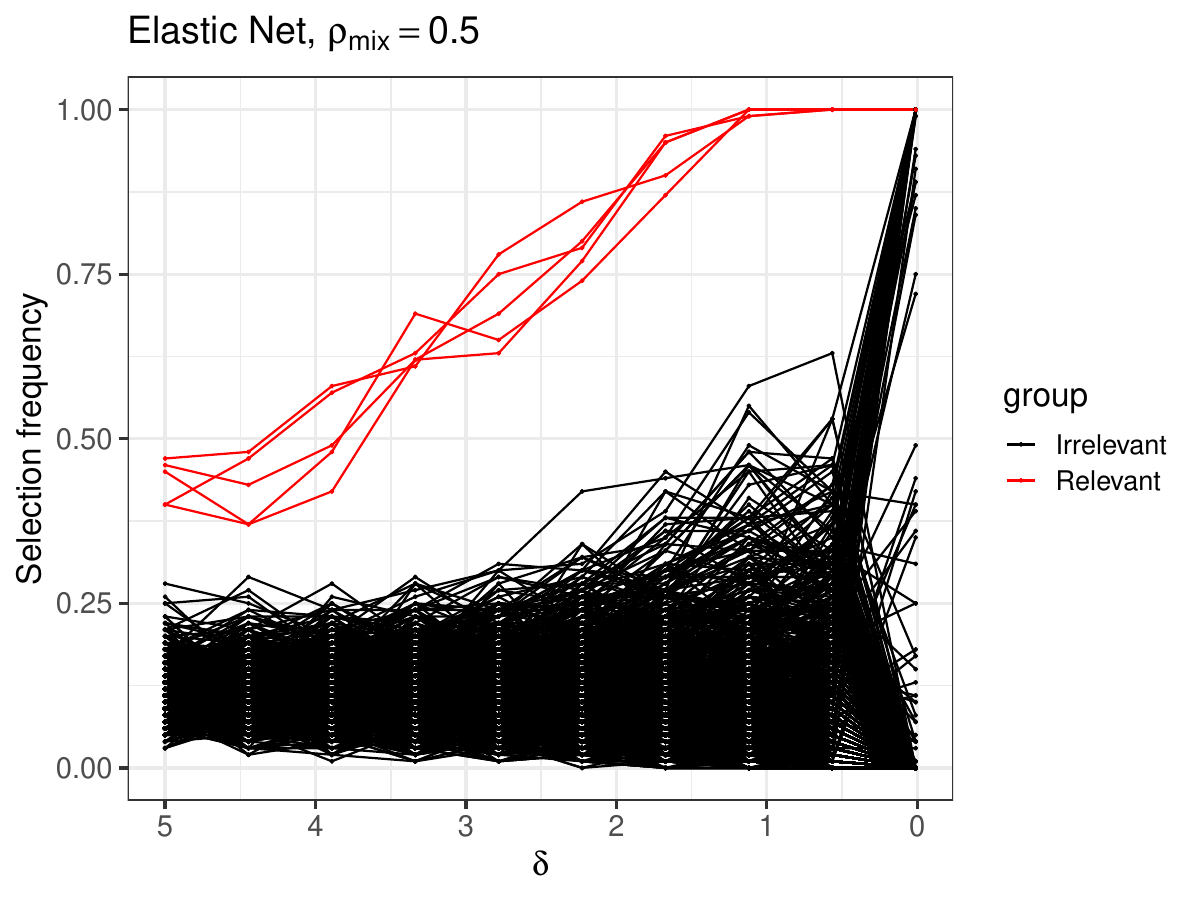}}
    \hfill
    \subfloat[$\rho_{\text{mix}} = 0.9$]{\includegraphics[width=0.32\textwidth, height=0.32\textheight]{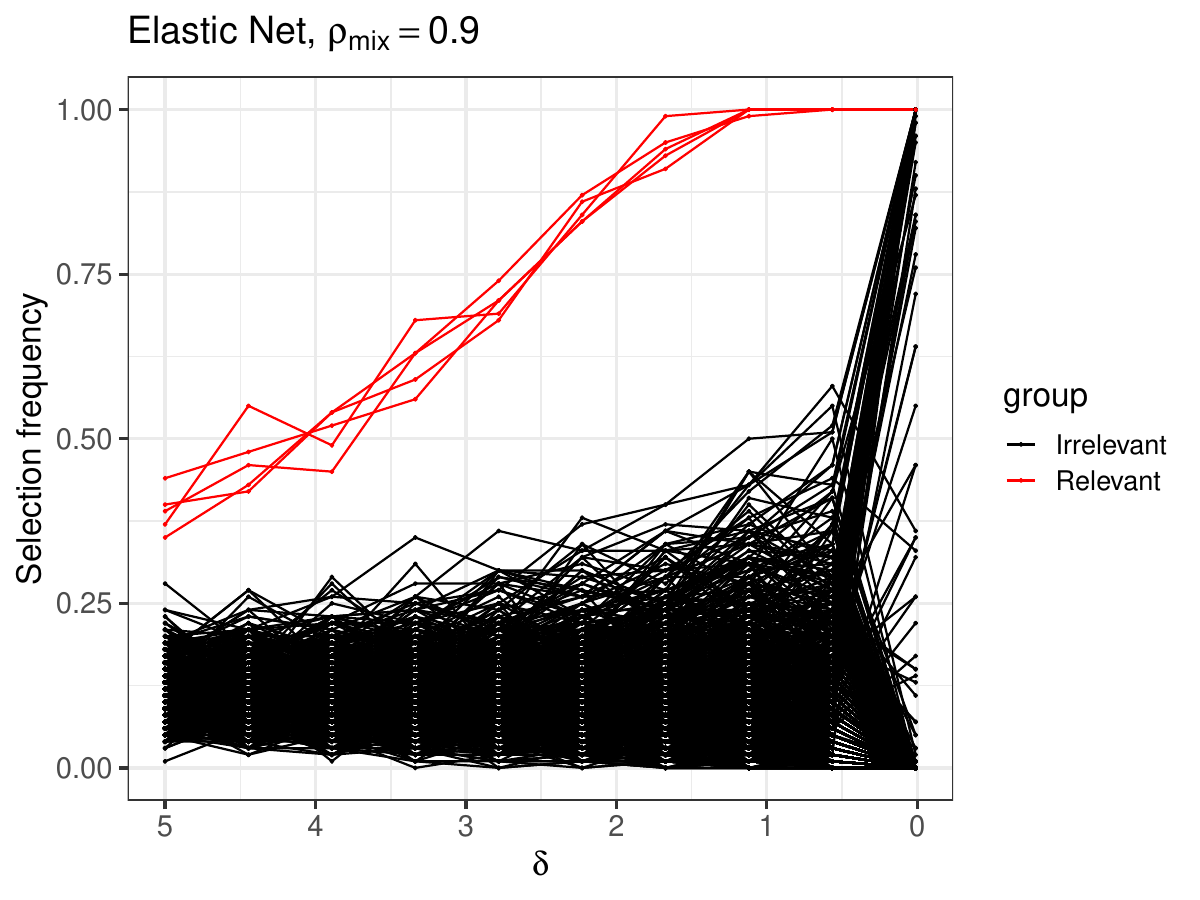}}

    \subfloat[$\rho_{\text{mix}} = 0.001$]{\includegraphics[width=0.32\textwidth, height=0.32\textheight]{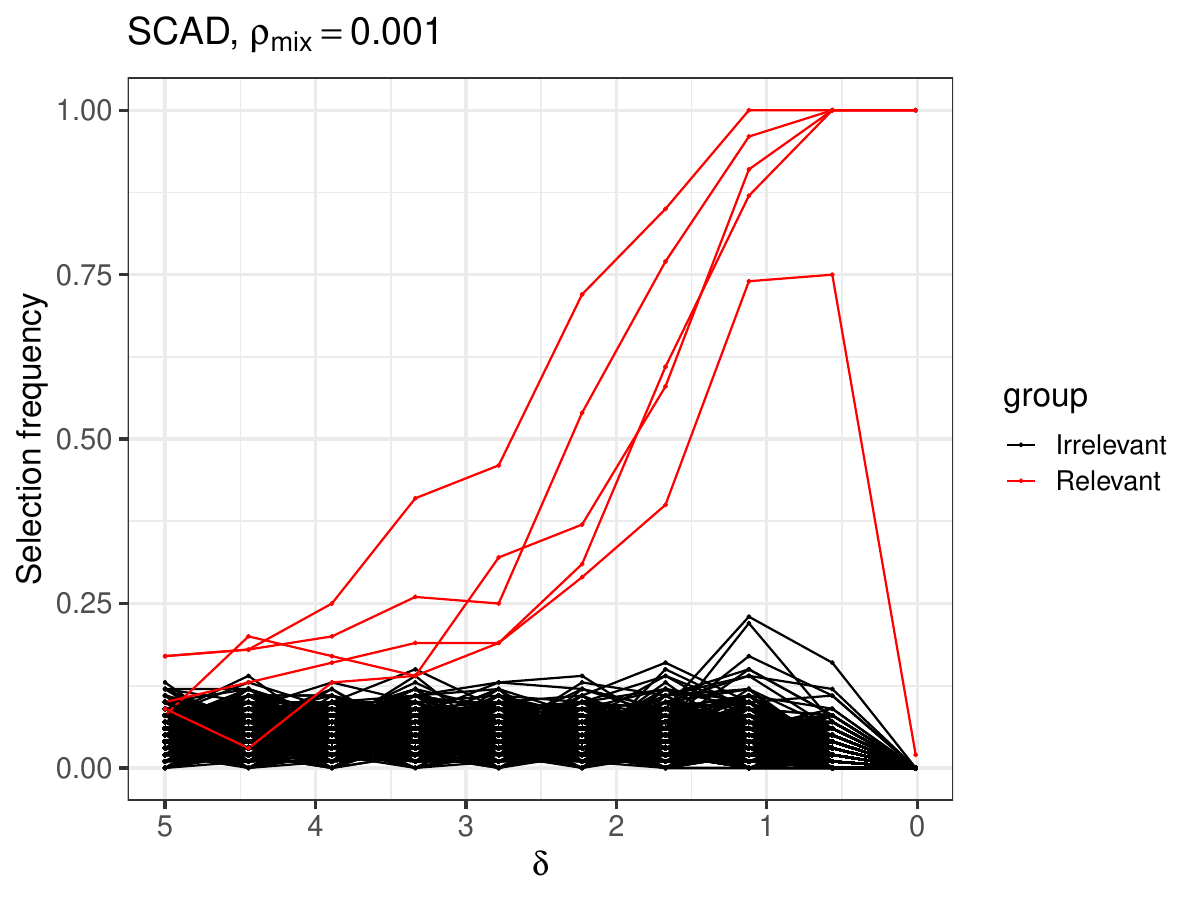}}
    \hfill
    \subfloat[$\rho_{\text{mix}} = 0.5$]{\includegraphics[width=0.32\textwidth, height=0.32\textheight]{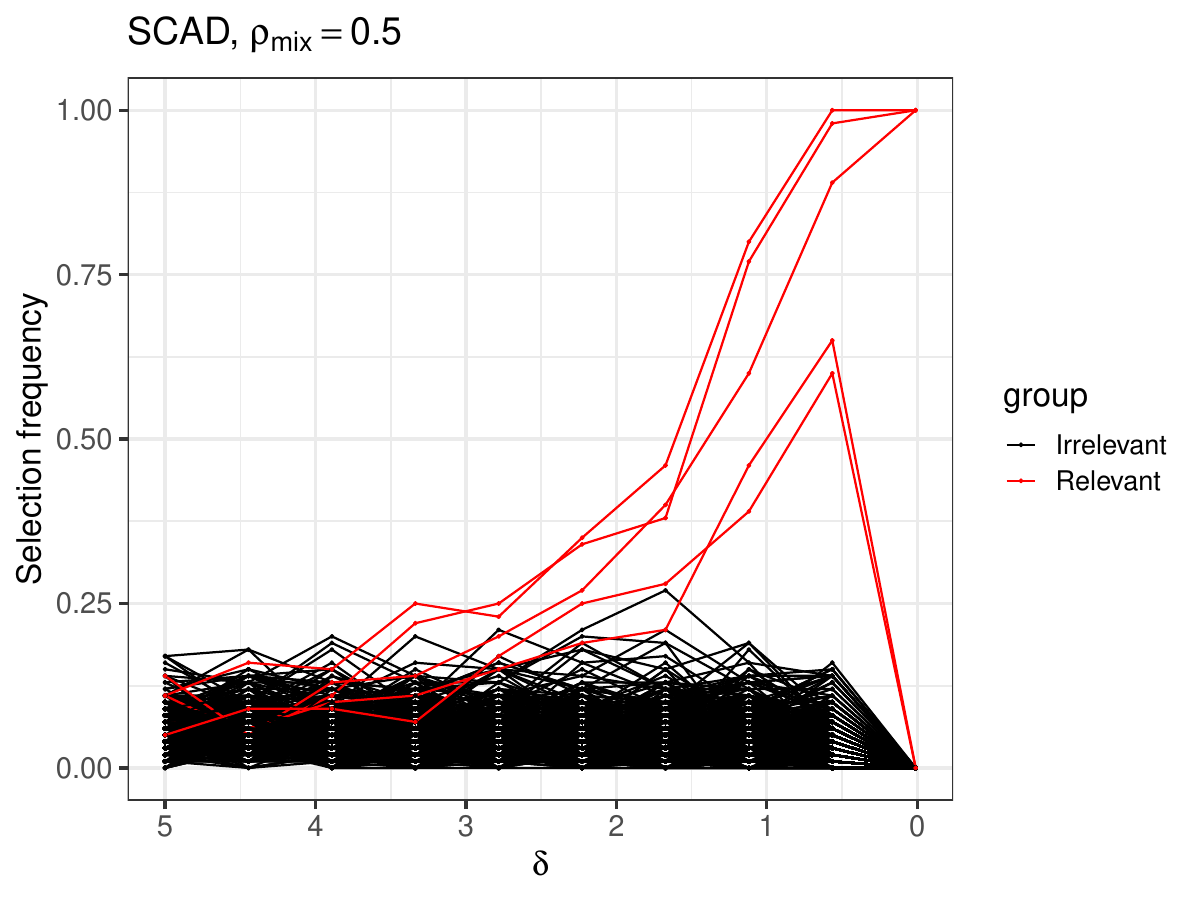}}
    \hfill
    \subfloat[$\rho_{\text{mix}} = 0.9$]{\includegraphics[width=0.32\textwidth, height=0.32\textheight]{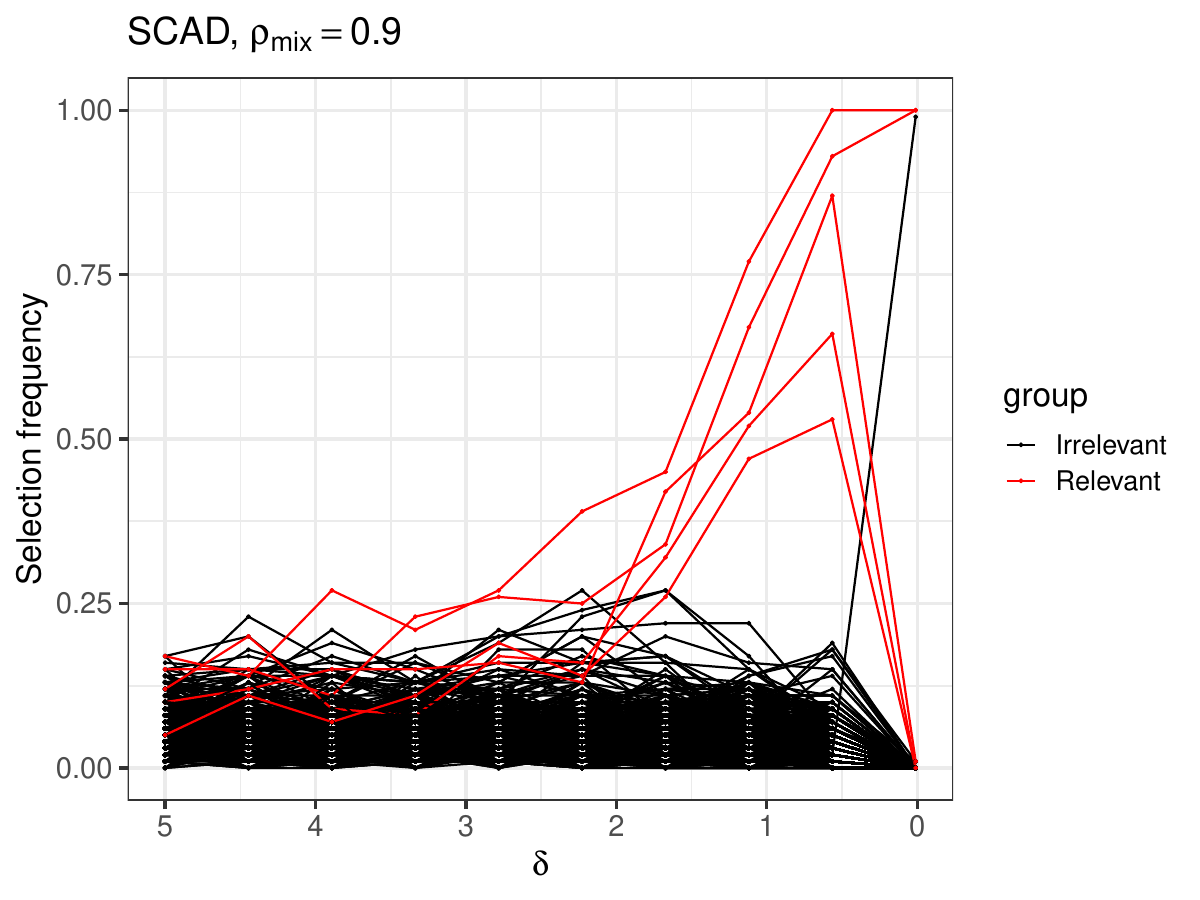}}

    \caption{Selection frequency of relevant (red) and irrelevant (black) features for ENet (first row) and SCAD (second row) under increasing design noise level $\delta$}
    \label{fig:grid3x3full}
\end{figure}

\section*{Theory Background}
The Lasso is a standard regularization-based feature selection method that augments the least-squares loss with an $\ell_1$ penalty to promote sparse coefficients. The estimator is
\begin{equation*}
\hat{\beta}(\lambda) = \argminA_{\beta \in \mathbb{R}^{p}} \left(\frac{1}{2n}\|Y - X\beta\|_{2}^{2} + \lambda \|\beta\|_{1}\right),
\end{equation*}
where $\lambda \in \mathbb{R}^{+}$ denotes the Lasso regularization parameter. We focus on the Lasso because of its central role in feature selection and its well-studied selection consistency conditions. We first show that these conditions are robust to sufficiently small perturbations of the design matrix, which provides a theory-grounded basis for studying stability along a path of progressively increasing noise. Writing $X^{(\delta)} = X + W$ gives $\Sigma^{(\delta)} = \tfrac{1}{n}(X^{(\delta)})^\top X^{(\delta)} = \Sigma + \Delta$ with $\Delta = \tfrac{1}{n}(X^\top W + W^\top X + W^\top W)$. We treat the bound on $\Delta$ as a deterministic envelope: assume that
\[
\|\Delta\|_\infty \le C_1\delta + C_2\delta^2,
\]
for constants $C_1,C_2>0$ depending on $X$. In the stochastic setting (Section~\ref{s2:hp}), a corresponding bound will be established with high probability. Throughout, $\|\cdot\|_\infty$ denotes the induced matrix $\ell_\infty$ norm, i.e., the maximum absolute row sum. We also note that if $\max_{j,k}|A_{jk}|\le t$, then $\|A\|_\infty \le p\,t$. Throughout, $C, C_t, c_1>0$ denote absolute constants that may vary from line to line. A sharper control of $\|\Delta\|_\infty$ can be obtained via standard matrix concentration inequalities (e.g., matrix Bernstein), which avoid the conservative entrywise-to-row-sum scaling adopted here and yield dimension-dependent rates. We use $\Sigma^{(\delta)}=\tfrac{1}{n}(X^{(\delta)})^\top X^{(\delta)}$ in the deterministic analysis below.
In the Gaussian setting, we instead consider the centered Gram (covariance) estimator
\[
\widetilde{\Sigma}^{(\delta)}=\tfrac{1}{n}(X^{(\delta)})^\top X^{(\delta)}-\delta^2 I_p,
\]
which removes the bias induced by $W^\top W$. The matrix $\widetilde{\Sigma}^{(\delta)}$ is an unbiased estimator of $\Sigma$, is symmetric, but is not necessarily positive semi-definite. The centering removes the leading bias of $W^\top W$; the remaining fluctuations are controlled in operator norm at the same order as the cross terms under standard Gaussian matrix concentration.

\section*{Proof of Lemma~\ref{l1}}
\begin{proof}
For $\delta \le 1$, we may absorb the quadratic term and write
\[
\|\Delta\|_\infty \le (C_1 + C_2)\delta,
\]
for a possibly larger constant.
Write
\[
\Sigma^{(\delta)}_{SS} = \Sigma_{SS} + \Delta_{SS},
\quad
\Sigma^{(\delta)}_{S^c S} = \Sigma_{S^c S} + \Delta_{S^c S}.
\]
Since the induced $\ell_\infty$ norm is the maximum absolute row sum, taking sub-matrices does not increase the norm. Hence
\[
\|\Delta_{SS}\|_\infty \le (C_1 + C_2)\delta,
\qquad
\|\Delta_{S^c S}\|_\infty \le (C_1 + C_2)\delta.
\]
\textbf{Step 1.}
Let $A = \Sigma_{SS}$ and $E = \Delta_{SS}$. If $\|A^{-1}\|_\infty \|E\|_\infty < 1$, then $A+E$ is invertible. This holds whenever $(C_1+C_2)\|A^{-1}\|_\infty \delta < 1$. In that case,
\[
\|(A+E)^{-1}\|_\infty
\le \frac{\|A^{-1}\|_\infty}{1 - \|A^{-1}\|_\infty \|E\|_\infty}.
\]
Choose
\[
\delta \le \delta_1 \coloneq \frac{1}{2 (C_1 + C_2)\|\Sigma_{SS}^{-1}\|_\infty},
\]
so that $\|A^{-1}\|_\infty \|E\|_\infty \le \tfrac12$. Then
\[
\|(\Sigma^{(\delta)}_{SS})^{-1}\|_\infty \le 2 \|\Sigma_{SS}^{-1}\|_\infty.
\]
Using
\[
(A+E)^{-1} - A^{-1} = -(A+E)^{-1} E A^{-1},
\]
we obtain
\[
\|(\Sigma^{(\delta)}_{SS})^{-1} - \Sigma_{SS}^{-1}\|_\infty
\le 2 (C_1 + C_2)\|\Sigma_{SS}^{-1}\|_\infty^2 \delta.
\]
\textbf{Step 2.}
\[
\|\Delta_{S^c S}\|_\infty \le (C_1 + C_2)\delta,
\quad
\|\Sigma^{(\delta)}_{S^c S}\|_\infty \le \|\Sigma_{S^c S}\|_\infty + (C_1 + C_2)\delta.
\]
Decompose
\begin{align*}
\Sigma^{(\delta)}_{S^c S}(\Sigma^{(\delta)}_{SS})^{-1}
&= \Sigma_{S^c S}\Sigma_{SS}^{-1}
+ \Delta_{S^c S}\Sigma_{SS}^{-1} \\
&\quad + \Sigma^{(\delta)}_{S^c S}
\big((\Sigma^{(\delta)}_{SS})^{-1} - \Sigma_{SS}^{-1}\big).
\end{align*}
Hence
\begin{align*}
\left\|
\Sigma^{(\delta)}_{S^c S}(\Sigma^{(\delta)}_{SS})^{-1}
- \Sigma_{S^c S}\Sigma_{SS}^{-1}
\right\|_\infty
&\le (C_1 + C_2)\delta \Big(\|\Sigma_{SS}^{-1}\|_\infty
+ 2\|\Sigma_{SS}^{-1}\|_\infty^2\|\Sigma_{S^c S}\|_\infty\Big) \\
&\quad + 2(C_1 + C_2)^2\|\Sigma_{SS}^{-1}\|_\infty^2 \delta^2 .
\end{align*}

\textbf{Step 3.}
Combining the IC with Step 2 yields
\[
\left\|\Sigma^{(\delta)}_{S^c S}(\Sigma^{(\delta)}_{SS})^{-1}\right\|_\infty
\le (1-\eta) + a\,\delta + b\,\delta^2,
\]
where
\[
a = (C_1 + C_2)\Big(\|\Sigma_{SS}^{-1}\|_\infty
+ 2\|\Sigma_{SS}^{-1}\|_\infty^2\|\Sigma_{S^c S}\|_\infty\Big),
\qquad
b = 2(C_1 + C_2)^2\|\Sigma_{SS}^{-1}\|_\infty^2.
\]
Choose
\[
\delta_0 \coloneq \min\left\{ \delta_1, \frac{\eta}{4a}, \sqrt{\frac{\eta}{4b}} \right\}.
\]
Then for all $0 \le \delta \le \delta_0$,
\[
a\,\delta + b\,\delta^2 \le \frac{\eta}{2},
\]
and therefore
\[
\left\|\Sigma^{(\delta)}_{S^c S}(\Sigma^{(\delta)}_{SS})^{-1}\right\|_\infty
\le 1 - \frac{\eta}{2}.
\]
\end{proof}

\section*{Lemma~\ref{lem:prod_gauss_subexp}: Statement and Proof}
\begin{lemma}[Product of independent Gaussians is sub-exponential]\label{lem:prod_gauss_subexp}
Let $Z_1,Z_2 \sim \mathcal{N}(0,1)$ be independent and set $U\coloneq Z_1Z_2$. Then $U$ is sub-exponential and
$\|U\|_{\psi_1} \le C_{\psi}$ for an absolute constant $C_{\psi}>0$.
\end{lemma}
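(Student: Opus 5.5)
To show that $U=Z_1Z_2$ is sub-exponential, I will bound its moment generating function directly. By the standard equivalence of sub-exponential characterizations, it suffices to find absolute constants $c,K>0$ such that $\mathbb{E}\exp(\lambda U)\le \exp(K^2\lambda^2)$ for all $|\lambda|\le c$. Since $\mathbb{E}U=0$, this forces $\|U\|_{\psi_1}\le C_\psi$ for an absolute $C_\psi$.

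First I condition on $Z_2$ and use the Gaussian MGF:
\[
\mathbb{E}\bigl[e^{\lambda Z_1Z_2}\mid Z_2\bigr]=e^{\lambda^2 Z_2^2/2}.
\]
Taking expectations over $Z_2$ and using the chi-square MGF gives, for $|\lambda|<1$,
\[
\mathbb{E}e^{\lambda U}=\mathbb{E}\,e^{\lambda^2Z_2^2/2}=(1-\lambda^2)^{-1/2}.
\]
For $|\lambda|\le 1/\sqrt2$, the elementary inequality $-\tfrac12\log(1-x)\le x$ on $[0,1/2]$ yields
\[
\mathbb{E}e^{\lambda U}\le e^{\lambda^2}.
\]
This is exactly the sub-exponential MGF condition with $K=1$ and $c=1/\sqrt2$.

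As a cross-check that avoids the equivalence theorem's constants, I can bound the Orlicz norm directly. By AM--GM, $|Z_1Z_2|\le (Z_1^2+Z_2^2)/2$. Then, using independence and $\mathbb{E}e^{Z^2/4}=\sqrt2$,
\[
\mathbb{E}e^{|U|/t}\le \bigl(\mathbb{E}e^{Z^2/(2t)}\bigr)^2=(1-1/t)^{-1}\qquad\text{for } t>1.
\]
Taking $t=2$ gives a value of exactly $2$, so $\|U\|_{\psi_1}\le 2$. This version is cleaner, and I would present it as the main proof, with the MGF computation kept as a remark.

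I do not expect a real obstacle. The only point needing care is matching the $\psi_1$ convention used later (in the proof of Theorem~\ref{thr:hp_gauss_final} via Bernstein's inequality). The AM--GM route settles this by giving an explicit constant.
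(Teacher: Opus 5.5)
Your proof is correct, and it contains two arguments. The MGF computation you keep as a remark is essentially the paper's proof. The paper computes $\mathbb{E}e^{tU}=(1-t^2)^{-1/2}$ and uses $-\log(1-x)\le 2x$ to get $\log\mathbb{E}e^{tU}\le t^2$. It states this for $|t|\le 1/2$, although, as you observe, $|t|\le 1/\sqrt2$ suffices since $x=t^2$. It then simply asserts that this local MGF bound yields a finite $\psi_1$ norm. That last step implicitly invokes the equivalence of sub-exponential characterizations (using $\mathbb{E}U=0$), so the constant is left unspecified.

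Your main route, via $|Z_1Z_2|\le (Z_1^2+Z_2^2)/2$ and the $\chi^2_1$ MGF, is genuinely different. It works directly from the Orlicz definition $\|U\|_{\psi_1}=\inf\{t>0:\mathbb{E}e^{|U|/t}\le 2\}$. It therefore bypasses the equivalence theorem entirely and produces the explicit constant $C_\psi=2$, which is precisely the step the paper leaves implicit.

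Your route also does not really need independence. If you replace the factorization by Cauchy--Schwarz, you get $\mathbb{E}e^{|U|/t}\le(1-2/t)^{-1/2}$ for $t>2$, hence $\|U\|_{\psi_1}\le 8/3$ for any, possibly dependent, pair of standard Gaussians. This is the usual $\|XY\|_{\psi_1}\le\|X\|_{\psi_2}\|Y\|_{\psi_2}$ argument.

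What the paper's MGF route buys instead is the exact MGF and explicit sub-exponential parameters in MGF form, namely $\mathbb{E}e^{\lambda U}\le e^{\lambda^2}$ for $|\lambda|\le 1/\sqrt2$. A Bernstein inequality phrased in terms of such $(\nu,b)$ parameters consumes these directly. However, that route relies on independence and exact Gaussianity to compute the MGF. For the $\psi_1$-based Bernstein inequality cited in Step 3 of the proof of Theorem~\ref{thr:hp_gauss_final}, your explicit $\psi_1$ bound is the more direct input.
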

\begin{proof}
For $|t|<1$,
\[
\mathbb{E}(e^{tU}) = (1-t^2)^{-1/2},
\qquad
\log \mathbb{E}(e^{tU}) = -\tfrac12 \log(1-t^2).
\]
Using $-\log(1-x)\le 2x$ for $|x|\le 1/2$,
\[
\log \mathbb{E}(e^{tU}) \le t^2 \quad \text{for } |t|\le 1/2.
\]
Hence $\log \mathbb{E}(e^{tU}) \le t^2$ for $|t|\le 1/2$, which implies a local sub-Gaussian MGF bound. This in turn yields a finite $\psi_1$-Orlicz norm, so $U$ is sub-exponential with $\|U\|_{\psi_1}\le C_\psi$.
\end{proof}

\section*{Proof of Theorem~\ref{thr:hp_gauss_final}}
\begin{proof}
Let
\[
L \coloneq \log\frac{4p^2}{\alpha},
\qquad
\Delta \coloneq \widetilde \Sigma^{(\delta)}-\Sigma,
\]
\[
t_1 \coloneq 2M\sqrt{\frac{2\delta^2 L}{n}},
\qquad
t_2 \coloneq C_t\,\delta^2\left(\sqrt{\frac{L}{n}} + \frac{L}{n}\right),
\qquad
\varepsilon \coloneq p(t_1 + t_2).
\]

\textbf{Step 1: decomposition.}
\[
\Delta
=\frac{1}{n}(X^\top W + W^\top X)
+\left(\frac{1}{n}W^\top W - \delta^2 I_p\right)
=: A + B.
\]

\textbf{Step 2: bound $A$.}
For each $(j,k)$,
\[
\left(\frac{1}{n}X^\top W\right)_{jk}
\sim \mathcal{N}\!\left(0,\ \frac{\delta^2}{n^2}\sum_{i=1}^n X_{ij}^2\right),
\qquad
\frac{\delta^2}{n^2}\sum_{i=1}^n X_{ij}^2 \le \frac{\delta^2 M^2}{n}.
\]
Thus
\[
\mathbb{P}(|(X^\top W/n)_{jk}|>t_1)
\le 2\exp\!\left(-\frac{n t_1^2}{2\delta^2 M^2}\right)
\le 2e^{-L}.
\]
Union bound over $p^2$ entries:
\[
\mathbb{P}\!\left(\max_{j,k}|(X^\top W/n)_{jk}|>t_1\right)
\le 2p^2 e^{-L}
= \frac{\alpha}{2}.
\]
Hence
\[
\|A\|_\infty \le p\,t_1
\quad \text{with probability at least } 1-\frac{\alpha}{2}.
\]

\textbf{Step 3: bound $B$.}

\emph{Diagonal.}
\[
\mathbb{P}(|B_{jj}|>t_2)\le 2e^{-L}.
\]

\emph{Off-diagonal.}
By sub-exponential Bernstein (see, e.g. \citet{vershynin2018high}) and choice of $C_t$ absorbing constants,
\[
\mathbb{P}(|B_{jk}|>t_2)\le 2e^{-L}.
\]
\emph{Union bound.}
\[
\mathbb{P}\!\left(\max_{j,k}|B_{jk}|>t_2\right)
\le 2p^2 e^{-L}
= \frac{\alpha}{2}.
\]
Thus
\[
\|B\|_\infty \le p\,t_2
\quad \text{with probability at least } 1-\frac{\alpha}{2}.
\]

\textbf{Step 4: combine.}
By union bound over Steps 2–3,
\[
\|\Delta\|_\infty \le p(t_1 + t_2)
\]
with probability at least $1-\alpha$.

\textbf{Step 5: deterministic IC stability.}
On the event $\|\Delta\|_\infty \le \varepsilon \le \varepsilon_0$, Lemma~\ref{l1} implies
\[
\left\|\widetilde \Sigma^{(\delta)}_{S^c S}
(\widetilde \Sigma^{(\delta)}_{SS})^{-1}\right\|_\infty
\le 1-\frac{\eta}{2}.
\]
Therefore the stated bound holds with probability at least $1-\alpha$.

As is clear from the arguments, the same proof strategy extends to other feature selection methods whose consistency relies on comparable near-orthogonality conditions.
\end{proof}

\section*{Proof of Theorem~\ref{thm:delta_avg_recovery}}
\begin{proof}
Fix $\delta\in\mathcal G$ and $j\in\{1,\dots,p\}$. Conditional on $(X,Y)$, the indicators
$ 1\{\hat\beta^{(b,\delta)}_j\neq 0\}$ are i.i.d.\ Bernoulli with mean $f_j^{(\delta)}$. By Hoeffding's
inequality \citep{409cf137-dbb5-3eb1-8cfe-0743c3dc925f}, for any $t>0$,
\[
\mathbb{P}\!\left(\left|\hat f_j^{(\delta)}-f_j^{(\delta)}\right|>t\right)\le 2e^{-2Bt^2}.
\]
Applying a union bound over all $mp$ pairs $(\delta,j)$ gives
\[
\mathbb{P}\!\left(\max_{\delta\in\mathcal G}\max_{1\le j\le p}\left|\hat f_j^{(\delta)}-f_j^{(\delta)}\right|>\varepsilon\right)
\le 2mp\,e^{-2B\varepsilon^2}=\alpha,
\]
with $\varepsilon$ as defined. Hence, with probability at least $1-\alpha$, for all $\delta\in\mathcal G$ and all
$j$,
\[
\left|\hat f_j^{(\delta)}-f_j^{(\delta)}\right|\le \varepsilon.
\]
On this event, averaging preserves the same uniform deviation bound:
\[
\left|\widehat{\bar f}_j-\bar f_j\right|
=
\left|
\frac{1}{m}\sum_{\delta\in\mathcal G}\big(\hat f_j^{(\delta)}-f_j^{(\delta)}\big)
\right|
\le
\frac{1}{m}\sum_{\delta\in\mathcal G}\left|\hat f_j^{(\delta)}-f_j^{(\delta)}\right|
\le \varepsilon.
\]
Therefore,
\[
\min_{j\in S}\widehat{\bar f}_j
\ge \min_{j\in S}\bar f_j-\varepsilon,
\qquad
\max_{k\in S^c}\widehat{\bar f}_k
\le \max_{k\in S^c}\bar f_k+\varepsilon.
\]
By Assumption~\ref{as:avg_sep}, $\min_{j\in S}\bar f_j\ge \max_{k\in S^c}\bar f_k+\gamma$, hence
\[
\min_{j\in S}\widehat{\bar f}_j-\max_{k\in S^c}\widehat{\bar f}_k
\ge \gamma-2\varepsilon.
\]
If $\varepsilon<\gamma/2$, then
$\min_{j\in S}\widehat{\bar f}_j>\max_{k\in S^c}\widehat{\bar f}_k$, so any
$\tau$ satisfying $\max_{k\in S^c}\widehat{\bar f}_k<\tau\le \min_{j\in S}\widehat{\bar f}_j$ yields
$\hat S(\tau)=S$.
\end{proof}

\section*{Remarks}
\begin{remark}\label{r1}
Beyond irrepresentability, Lasso support recovery also requires a \emph{beta-min} condition,
\[
\min_{j\in S}|\beta_j|\ge c_0,
\]
for some $c_0>0$. Under perturbations, this introduces an additional restriction on the admissible noise level. In particular, since the Lasso estimation error typically increases with the perturbation magnitude, correct support recovery requires this error to be small relative to the minimal signal strength. Consequently, $\delta$ must be sufficiently small compared to $c_0$ so that the perturbation-induced error does not obscure the nonzero coefficients.
\end{remark}
\begin{remark}\label{r2}
Let $X^{(\delta)} = X + E_\delta$, where $E_\delta \in \mathbb{R}^{n\times p}$ has i.i.d. entries with mean $0$ and variance $\delta^2$, independent of $X$. Consider
\[
Y = X\beta + \varepsilon,
\qquad \mathbb{E}(\varepsilon)=0,\ \mathrm{Cov}(\varepsilon)=\sigma^2 I_n,\ \varepsilon \perp (X,E_\delta).
\]
Since $X = X^{(\delta)} - E_\delta$, we can rewrite
\[
Y = X^{(\delta)}\beta + \tilde{\varepsilon},
\qquad \tilde{\varepsilon} \coloneq \varepsilon - E_\delta \beta.
\]
Then
\[
\mathrm{Cov}(\tilde{\varepsilon})
= \mathrm{Cov}(\varepsilon) + \mathrm{Cov}(E_\delta \beta)
= \sigma^2 I_n + \delta^2 \|\beta\|_2^2 I_n
= \big(\sigma^2 + \delta^2 \|\beta\|_2^2\big) I_n,
\]
using independence across columns of $E_\delta$ within each row and independence from $\varepsilon$. This suggests an effective increase in noise level (in a heuristic sense), degrading the signal-to-noise ratio and tending to suppress selections under fixed tuning (e.g., fixed $\lambda$). This is the classical errors-in-variables effect, where treating $X^{(\delta)}$ as noise-free induces attenuation and loss of identifiability in high-noise regimes. Note that $\tilde{\varepsilon}$ is not independent of $X^{(\delta)}$,
so this is not a standard homoscedastic regression model.
\end{remark}
\begin{remark}
Our method increases computational cost by approximately $|\mathcal{G}|$ relative to Stability Selection, as it repeats the procedure across perturbation levels using full-sample fits rather than half-sample sub-sampling, trading higher computation for robustness under noise perturbations.
\end{remark}
\section*{Algorithm}
\begin{algorithm}[h]
\caption{2D Stability Selection}
\label{alg:delta_averaged_selection}
\begin{algorithmic}[1]
\State \textbf{Input:} Data $(X,Y)$; grid $\mathcal{G}=\{\delta_1,\dots,\delta_m\}$; number of replicates $B$
\State \textbf{Input:} Base selector with fixed tuning parameters (e.g.\ fixed $\lambda$)
\For{each $\delta \in \mathcal{G}$}
    \For{$b=1$ to $B$}
        \State Generate $W^{(b)}$ with i.i.d.\ $\mathcal{N}(0,\delta^2)$ entries
        \State Form $X^{(b,\delta)} = X + W^{(b)}$
        \State Fit selector to $(X^{(b,\delta)},Y)$ to obtain $\hat\beta^{(b,\delta)}$
    \EndFor
    \State Compute frequencies $\hat f_j^{(\delta)} = \frac{1}{B}\sum_{b=1}^B 1\{\hat\beta_j^{(b,\delta)}\neq 0\}$ for all $j$
\EndFor
\State Compute $\delta$-averaged frequencies $\widehat{\bar f}_j = \frac{1}{m}\sum_{\delta\in\mathcal G}\hat f_j^{(\delta)}$ for all $j$
\State Sort $\{\widehat{\bar f}_j\}_{j=1}^p$ in decreasing order to obtain $\widehat{\bar f}_{(1)} \ge \cdots \ge \widehat{\bar f}_{(p)}$
\State Compute $\hat s = \arg\max_{1 \le k < p} \big(\widehat{\bar f}_{(k)} - \widehat{\bar f}_{(k+1)}\big)$
\State Set $\hat\tau = \frac{1}{2}\big(\widehat{\bar f}_{(\hat s)} + \widehat{\bar f}_{(\hat s+1)}\big)$
\State Let $\hat S=\{j:\widehat{\bar f}_j\ge \hat\tau\}$
\State \textbf{return} $\hat S$
\end{algorithmic}
\end{algorithm}

\end{appendices}

\end{document}